\newcommand{\negA}{\vspace{-0.05in}}
\newcommand{\negB}{\vspace{-0.1in}}
\newcommand{\mysection}[1]{\negB\section{#1}\negA}
\newcommand{\mysubsection}[1]{\negA\subsection{#1}\negA}
\newcommand{\myparagraph}[1]{\par\smallskip\par\noindent{\bf{}#1:~}}
\newcommand{\eps}{\varepsilon}
\newcommand{\comment}[1]{}
\begin{document}
\sloppy
\setcounter{page}{1} 
\title{Fast Distributed Approximation for Max-Cut}
\author{Keren Censor-Hillel\inst{1}\thanks{The research is supported in part by the Israel Science Foundation (grant 1696/14).} \and Rina Levy\inst{1} \and Hadas Shachnai\inst{1}}
\institute{Computer Science Department, Technion, Haifa 3200003, Israel. \mbox{E-mail: {\tt \{ckeren,rinalevy,hadas\}@cs.technion.ac.il.}}}

\maketitle

	
\begin{abstract}
Finding a maximum cut is a fundamental task in many computational settings. Surprisingly, it has been insufficiently studied in the classic distributed settings, where vertices communicate by 
synchronously sending messages to their neighbors according to the underlying graph, known as the $\mathcal{LOCAL}$ or $\mathcal{CONGEST}$ models. 
We amend this by obtaining 
almost optimal algorithms for Max-Cut on a wide class of  graphs in these models.
In particular, for any  $\epsilon > 0$, we develop randomized 
approximation algorithms achieving a ratio of $(1-\eps)$ to the optimum
 for Max-Cut  on bipartite graphs in the  
$\mathcal{CONGEST}$ model, and on {\em general} graphs in the 
$\mathcal{LOCAL}$ model.

We further present efficient {\em deterministic} algorithms, including a
$1/3$-approximation for Max-Dicut in 
our models, thus improving the best known (randomized) ratio of $1/4$.
Our algorithms make
non-trivial use of the greedy approach of Buchbinder et al. (\textit{SIAM
Journal on Computing, 2015}) for maximizing an unconstrained (non-monotone)
submodular function, which may be of independent interest.
\end{abstract}	

\keywords{Distributed graph algorithms, Max-Cut, Coloring, Clustering, Approximation Algorithms}


\mysection{Introduction}
Max-Cut is one of the fundamental problems in theoretical computer science. A {\em cut} in an undirected graph is a bipartition of the vertices, whose size is the number of edges crossing the bipartition. Finding cuts of maximum size in a given graph is among Karp's famous 21 NP-complete problems \cite{karp1972reducibility}. Since then, Max-Cut has received considerable attention,
in approximation algorithms \cite{sahni1976p,goemans1995improved,hadlock1975finding,trevisan2012max}, parallel computation~\cite{TanThesis}, parameterized complexity (see, e.g., \cite{saurabh2016k} and the references therein), and streaming algorithms (see, e.g., \cite{kapralov2015streaming}). 

Max-Cut has a central application in {\em wireless mesh networks (WMNs)}.
The capacity of WMNs
that operate over a single frequency can be increased significantly by enhancing each router with multiple {\em transmit (Tx)}
or {\em receive (Rx)} ({\em MTR}) capability. Thus, a node will not experience collision when two or more neighbors transmit to it. Yet, interference occurs if a node transmits and receives
simultaneously. This is known as the {\em no mix-tx-rx} constraint. The set of links activated in each time slot, defining the capacity of an MTR WMN,
 is governed by a link scheduler. As shown in~\cite{CSM12}, link scheduling 
is equivalent to finding Max-Cut in each time slot. A maximum cut contains the set of 
non-conflicting links that can be activated at the same time, i.e, they adhere to the no mix-tx-rx constraint. The induced bipartition of the vertices at each time slot defines
a set of transmitters and a set of receivers in this slot.  
Link scheduling algorithms based on approximating Max-Cut, and other applications in wireless networks, can be found
in~\cite{WL+13,XH+13,KB16,WCS16,XCRS16}.\footnote{Max-Cut naturally arises also in  VLSI \cite{chang1987efficient}, statistical physics \cite{barahona1988application} and machine learning \cite{wang2013semi}.}

Surprisingly, Max-Cut has been insufficiently studied in the classic distributed settings, where vertices communicate by synchronously sending messages to their neighbors according to the underlying graph, known as the $\mathcal{LOCAL}$ or $\mathcal{CONGEST}$ models.
Indeed, 
there are known distributed algorithms for Max-Cut using MapReduce techniques \cite{barbosa2015new,mirrokni2015randomized,mirzasoleiman2013distributed}. In this setting, the algorithms partition the ground set among $m$ machines and obtain a solution using all the outputs. However, despite a seemingly similar title, our distributed setting is completely different.

In this paper we address Max-Cut in the classic distributed network models, where the graph represents a synchronous communication network. At the end of the computation, each vertex decides locally whether it joins the subset $S$ or $\bar S$, and outputs $1$ or $0$, respectively, so as to obtain a cut of largest possible size.

It is well known that choosing a random cut, i.e., assigning each vertex to $S$ or $\bar{S}$ with probability $1/2$, yields a $\frac{1}{2}$-approximation for Max-Cut, and a $\frac{1}{4}$-approximation for Max-Dicut, defined on directed graphs (see, e.g., \cite{mitzenmacher2005probability,motwani2010randomized}).\footnote{In Max-Dicut we seek the maximum size edge-set crossing from 
$S$ to $\bar{S}$.}
Thus, a local algorithm, where each vertex outputs $0$ or $1$ with probability $1/2$, yields the above approximation factors with no communication required.
On the other hand, we note that a single vertex can find an optimal solution, once it has learned the underlying graph. However, this requires a number of communication rounds that depends {\em linearly} on global network parameters (depending on the exact model considered). This defines a tradeoff between time complexity and the approximation ratio obtained by distributed Max-Cut algorithms. The huge gap between the above results raises the following natural questions: How well can Max-Cut be approximated in the distributed setting, using a bounded number of communication rounds?
Or, more precisely: How many communication rounds are required for obtaining an approximation ratio strictly larger than half, or even a \emph{deterministic} $\frac{1}{2}$-approximation for Max-Cut?

To the best of our knowledge, these questions have been studied in our distributed network models only for a restricted graph class. Specifically, the paper \cite{hirvonen2014large} suggests a distributed algorithm for Max-Cut on $d$-regular triangle-free graphs, that requires a single communication round and provides a $(1/2+0.28125/\sqrt{d})$-approximation.

The key contribution of our paper is in developing two main techniques for approximating Max-Cut and Max-Dicut in distributed networks, with \emph{any} communication graph. Below we detail the challenges we face, and our methods for overcoming them.

\mysubsection{The Challenge}
In the $\mathcal{LOCAL}$ model, where message sizes and the local computation power are unlimited, every standard graph problem can be solved in $O(n)$ communication rounds. For Max-Cut it also holds that finding an optimal solution requires $\Omega(n)$ communication rounds. This lower bound follows from Linial's seminal lower bound \cite[Theorem 2.2]{linial1992locality} for finding a 2-coloring of an even-sized cycle. In an even cycle, the maximum cut contains all edges. Therefore, finding a Max-Cut is equivalent to finding a 2-coloring of the graph.

An approach that proved successful in many computational settings $-$ in 
 tackling hard problems $-$ is to relax the optimality requirement and settle for approximate solutions. Indeed, in the distributed setting, many approximation algorithms have been devised to overcome the costs of finding exact solutions (see, e.g., \cite{kuhn2016local,lotker2008improved,ghaffari2013distributed,kuhn2010distributed,henzinger2016deterministic,nanongkai2014distributed,aastrand2010fast,aastrand2009local,lenzen2013distributed,DBLP:conf/podc/Bar-YehudaCS16}, and the survey of Elkin \cite{elkin2004distributed}). Our work can be viewed as part of this general approach. However, we face crucial hurdles attempting to use the known sequential toolbox for approximating Max-Cut in the distributed setting.

As mentioned above, a $\frac{1}{2}$-approximation for Max-Cut can be obtained easily with no need for communication. While this holds in all of the above models, improving the ratio of $1/2$ is much more complicated.
In the sequential setting, an approximation factor strictly larger than $1/2$ was obtained in the mid-1990's using semidefinite programming~\cite{goemans1995improved} (see Section~\ref{subsection - related work}). 
Almost two decades later, the technique was applied by~\cite{TanThesis} to obtain a parallel randomized algorithm for Max-Cut, achieving a ratio of $(1-\epsilon) 0.878$ to the optimum, 
for any $\eps > 0$. Adapting this algorithm to our distributed setting seems non-trivial, as it
relies heavily on global computation.
Trying to apply other techniques, such as local search, unfortunately leads to 
linear running time, 
 because of the need to compare values of global solutions.

Another obstacle that lies ahead is the lack of {\em locality} in Max-Cut, due to strong dependency between the vertices. The existence of an edge in the cut depends on the assignment of both of its endpoints. This results in a chain of dependencies and raises the question whether cutting the chain can still guarantee a good approximation ratio.

\mysubsection{Our Contribution}
We develop two main techniques for approximating Max-Cut, as well as Max-Dicut. Our first technique relies on the crucial observation that the cut value is additive for edge-disjoint sets of vertices. 
Exploiting this property, we design \emph{clustering-based} algorithms, in which we decompose the graph into small-diameter clusters, find an optimal solution within each cluster, and prove that the remaining edges still allow the final solution to meet the desired approximation ratio. 
An essential component in our algorithms is efficient graph decomposition to such small-diameter clusters connected by few edges (also known as a {\em padded partition}), inspired by
a parallel algorithm of~\cite{miller2013parallel} (see also~\cite{elkin2016distributed,ElkinN17}).

For general graphs, this gives $(1-\epsilon)$-approximation algorithms for Max-Cut and Max-Dicut, requiring $O(\frac{\log n}{\epsilon})$ communication rounds in the $\mathcal{LOCAL}$ model. For the special case of a bipartite graph, we take advantage of the graph structure to obtain an improved
clustering-based algorithm, which does not require large messages. The algorithm achieves a $(1-\epsilon)$-approximation for Max-Cut in $O(\frac{\log n}{\epsilon})$ rounds, in the more restricted $\mathcal{CONGEST}$ model. 

For our second technique, we observe that the contribution of a specific vertex to the cut depends only on the vertex itself and its immediate neighbors. 
We leverage this fact to make multiple decisions in parallel by independent sets of vertices. We find such sets using distributed coloring algorithms. 
Our \emph{coloring-based} technique,
which makes non-trivial use of the greedy approach of~\cite{buchbinder2015tight}  for maximizing an unconstrained 
submodular function, yields
deterministic $\frac{1}{2}$-approximation and $\frac{1}{3}$-approximation algorithms for Max-Cut and Max-Dicut, respectively, and a randomized $\frac{1}{2}$-approximation algorithm for Max-Dicut. Each of these algorithms requires $\tilde{O}(\Delta +\log^*n)$ communication rounds in the $\mathcal{CONGEST}$ model, where $\Delta$ is the maximal degree of the graph, and $\tilde{O}$ ignores polylogarithmic factors in $\Delta$. 

Finally, we present $\mathcal{LOCAL}$ algorithms which combine both of our techniques. Applying the coloring-based technique to low-degree vertices, and the clustering-based technique to high-degree vertices, allows as to design faster deterministic algorithms with approximation ratios of $\frac{1}{2}$ and $\frac{1}{3}$ for Max-Cut and Max-Dicut, respectively, requiring $\min\{\tilde{O}(\Delta +\log^*n),O(\sqrt{n})\}$ communication rounds. 
Table \ref{results table} summarizes our results.
\vspace{-5mm}
\begin{table}
	\caption[]{A summary of our results.}
	\begin{adjustbox}{width=1\textwidth}
		\small
		\begin{tabular} {|c|c|c|c|c|c|c|}
			\hline
			\multicolumn{4}{|c|}{Algorithm Properties} & \multicolumn{2}{|c|}{Approximation Ratio} & {}\\
			\hline
			Rounds & Deterministic & Model & Graph & Max-Cut & Max-Dicut & {}\\
			\hline
			no communication & \ding{55} & $\mathcal{CONGEST}$ & any & 1/2 & 1/4 & folklore\\
			$O(\log n / \epsilon)$ & \ding{55} & $\mathcal{CONGEST}$ & bipartite & $1-\epsilon$ & $-$ & new \\
			$O(\log n / \epsilon)$ & \ding{55} & $\mathcal{LOCAL}$ & any & $1-\epsilon$ & $1-\epsilon$ & new\\
			$\tilde{O}(\Delta +\log^*n)$ & \ding{51} & $\mathcal{CONGEST}$ & any & $1/2$ & $1/3$ & new\\
			$\tilde{O}(\Delta +\log^*n)$ & \ding{55} & $\mathcal{CONGEST}$ & any & $1/2$ & $1/2$ & new\\
			$\min\{\tilde{O}(\Delta +\log^*n),O(\sqrt{n})\}$ & \ding{51} & $\mathcal{LOCAL}$ & any & $1/2$ & $1/3$ & new\\		
			\hline
		\end{tabular}
	\end{adjustbox}
	\label{results table}
\end{table}
\vspace{-5mm}

\mysubsection{Background and Related Work}\label{subsection - related work}
The weighted version of Max-Cut is one of Karp's NP-complete problems \cite{karp1972reducibility}. The unweighted version that we study here is also known to be NP-complete \cite{garey1976some}. 

While there are graph families, such as planar and bipartite graphs, in which a maximum cut can be found in polynomial time \cite{hadlock1975finding,grotschel1981weakly}, in general graphs, even approximating the problem is NP-hard. In the sequential setting, one cannot obtain an approximation ratio better than $\frac{16}{17}$ for Max-Cut, or an approximation ratio better than $\frac{12}{13}$ for Max-Dicut, unless $P=NP$ \cite{trevisan2000gadgets,haastad2001some}.

Choosing a random cut, i.e., assigning each vertex to $S$ or $\bar{S}$ with probability $1/2$, yields a $\frac{1}{2}$-approximation for Max-Cut, and $\frac{1}{4}$-approximation for Max-Dicut. In the sequential setting there are also deterministic algorithms yielding the above approximation ratios \cite{sahni1976p,papadimitriou1988optimization}.
For 20 years there was no progress in improving the $1/2$ constant in the approximation ratio for Max-Cut, until in 1995, Goemans and Williamson \cite{goemans1995improved} achieved the currently best known approximation ratio for Max-Cut, using semidefinite programming. They present a $0.878$-approximation algorithm, which is optimal assuming the Unique Game Conjecture holds \cite{khot2007optimal}. In the same paper, Goemans and Williamson also give a $0.796$-approximation algorithm for Max-Dicut. This ratio was improved later by Matuura et al. \cite{matuura20010}, to 0.863.
Using spectral techniques, a $0.53$-approximation algorithm for Max-Cut was given by Trevisan \cite{trevisan2012max}. 
In \cite{kale2010combinatorial} Kale and Seshadhri present a combinatorial approximation algorithm for Max-Cut using random walks, which gives a $(0.5+\delta)$-approximation, where $\delta$ is some positive constant which appears also in 
the running time of the algorithm. In particular, for $\tilde{O}(n^{1.6}),\tilde{O}(n^2)$ and $\tilde{O}(n^3)$ times, the algorithm achieves approximation factors of $0.5051, 0.5155$ and $0.5727$, respectively.

Max-Cut and Max-Dicut can also be viewed as special cases of submodular maximization, which has been widely studied. It is known that choosing  a solution set $S$ uniformly at random yields a $\frac{1}{4}$-approximation, and a $\frac{1}{2}$-approximation for a general and for symmetric submodular function, respectively \cite{feige2011maximizing}. This corresponds to the known random approximation ratios for Max-Cut and Max-Dicut. Buchbinder et al. \cite{buchbinder2015tight} present determinstic $\frac{1}{2}$-approximation algorithms for both symmetric and asymmetric submodular functions.
These algorithms assume 
that the submodular function is accessible through a black box returning $f(S)$ for any given set $S$ (known as the {\em value oracle} model).

In the recent years, there is an ongoing effort to develop distributed algorithms for submodular maximization problems, using MapReduce techniques \cite{barbosa2015new,mirrokni2015randomized,mirzasoleiman2013distributed}. Often, the inputs consist of large data sets, for which a sequential algorithm may be inefficient. The main idea behind these algorithms is to partition the ground set among $m$ machines, and have each machine solve the problem optimally independently of others. After all machines have completed their computations, they share their solutions. A final solution is obtained by solving the problem once again over a union of the partial solutions. The algorithms achieve performance guarantees close to the sequential algorithms while decreasing the running time, where the running time is the number of communication rounds among the machines. As mentioned above, these algorithms do not apply to our classic distributed settings.


\mysection{Preliminaries}
The Max-Cut problem is defined as follows. Given an undirected graph $G=(V,E)$, one needs to divide the vertices into two subsets, $S\subset{V}$ and $\bar{S}=V\setminus{S}$, such that the size of the cut, i.e., the number of edges between $S$ and the complementary subset $\bar{S}$, is as large as possible.
In the Max-Dicut problem, the given graph $G=(V,E)$ is directed, and the cut is defined only as the edges which are directed from $S$ to $\bar{S}$. As in the Max-Cut problem, the goal is to obtain the largest cut.

Max-Cut and Max-Dicut can be described as the problem of maximizing the submodular function $f(S)=|E(S,\bar{S})|$, where for Max-Dicut $f(S)$ counts only the edges directed from $S$ to $\bar{S}$.
Given a finite set $X$, a \emph{submodular} function is a function $f:2^X\to\mathbb{R}$, where $2^X$ denotes the power set of $X$, which satisfies the equivalent definitions:
\begin{enumerate}
	\item For any $S,T\subseteq{X}$:  $f(S\cup{T})+f(S\cap{T})\le{f(S)+f(T)}.$
	\item For any $A\subseteq{B}\subseteq{X}$ and $x\in{X}\setminus{B}$:  $f(B\cup{\{x\}})-f(B)\le{f(A\cup{\{x\}})-f(A)}.$
\end{enumerate}

For Max-Cut and Max-Dicut, the submodular function also satisfies the following equality:
For every disjoint sets $S,T\subseteq X$ such that $E_{S\times T}=\{(u,v)|u\in S, v\in T\}=\emptyset$, we have that $f(S)+f(T)=f(S\cup T)$.
Note that for Max-Cut, the function is also symmetric, i.e., $f(S)=f(\bar{S})$.

\myparagraph{Model}\label{model} 
We consider a distributed system, modeled by a graph $G=(V,E)$, in which the vertices represent the computational entities, and the edges represent the communication channels between them. We assume that each vertex $v$ has a unique identifier $id(v)$ of size $O(\log{n})$, where $n=|V|$.

The communication between the entities is synchronous, i.e., the time is divided into rounds. In each round, the vertices send messages simultaneously to all of their neighbors and make a local computation based on the information gained so far. This is the classic $\mathcal{LOCAL}$ model \cite{peleg2000distributed}, which focuses on analyzing how locality affects the distributed computation. Therefore, the messages size and local computation are unlimited, and the complexity is measured by the number of communication rounds needed to obtain a solution. It is also important to study what can be done in the more restricted $\mathcal{CONGEST}$ model \cite{peleg2000distributed}, in which the message size is bounded to $O(\log n)$.

We assume that each vertex has preliminary information including the size of the network $n=|V|$, its neighbors, and the maximal degree of the graph
 $\Delta$.\footnote{This assumption is needed only for the $(\Delta +1)$-coloring algorithm \cite{barenboim2015deterministic} used in Section \ref{Section - coloring based algorithms}; it can be omitted
 (see \cite{barenboim2015deterministic}), increasing the running time by a constant factor.}

Each vertex runs a local algorithm in order to solve the Max-Cut problem. Along the algorithm, each vertex decides locally whether it should be in $S$ or in $\bar S$, and outputs $1$ or $0$ respectively. We define the \emph{solution} of the algorithm, as the set of all outputs. Note that each vertex does not hold the entire solution, only local information. The solution $value$ is defined as the size of the cut induced by the solution. We show that this value approximates the size of the maximum cut.


\mysection{Clustering-Based Algorithms} 
In this section we present clustering-based algorithms for Max-Cut and Max-Dicut. Our technique uses the observation that Max-Cut is a collection of edges having their endpoints in different sets; therefore, it can be viewed as the union of cuts in the disjoint parts of the graph.

Given a graph $G = (V,E)$, we first eliminate a small fraction of edges to obtain small-diameter connected components. Then, the problem is solved optimally within each connected component. For general graphs, this is done by gathering the topology of the component at a single vertex. For the special case of a bipartite graph, we can use the graph structure to propagate less information.
Since the final solution, consisting of all the vertices local decisions, is at least as good as the sum of the optimal solutions in the components, and since the fraction of eliminated edges is small, we prove that the technique yields a $(1-\epsilon)$-approximation.

\mysubsection{A Randomized Distributed Graph Decomposition}
We start by presenting the randomized distributed graph decomposition algorithm. The algorithm is inspired by a parallel graph decomposition by Miller et al. \cite{miller2013parallel} that we adapt to the distributed model as we describe next.\footnote{Our algorithm can be viewed as one phase of the distributed algorithm presented by Elkin et al. in \cite{elkin2016distributed} with some necessary changes.}. The PRAM algorithm of~\cite{miller2013parallel} generates a {\em strong padded partition} of a given graph, namely, a partition into connected
 components with strong diameter $O(\frac{\log n}{\beta})$, for some $\beta \leq 1/2$, such that the fraction of edges that cross between different clusters of the partition is 
 at most $\beta$.
 As we prove next, the distributed version guarantees the same properties with high probability and requires only $O(\frac{\log n}{\beta})$ communication rounds in the $\mathcal{CONGEST}$ model.

The distributed version of the graph decomposition algorithm works as follows: Let $\delta_v$ be a random value that vertex $v$ chooses from an exponential distribution with parameter $\beta$. Define the \emph{shifted distance} from vertex $v$ to vertex $u$ as $dist_{\delta} (u,v) = dist(u,v)-\delta_u$. Along the algorithm each vertex $v$ finds a vertex $u$ within its $\frac{k\log n}{\beta}$-neighborhood, where $k$ is a constant, that minimizes $dist_{\delta} (u,v)$. We define this vertex as $v$'s \emph{center}. This step creates the difference between the parallel and the distributed decomposition, as in the parallel algorithm each vertex chooses its center from the entire ground set $V$. However as we prove next, the process still generates a decomposition with the desired properties. Furthermore, w.h.p. the distributed algorithm outputs a decomposition identical to the one created by the parallel algorithm.
A pseudocode of the algorithm is given in Algorithm \ref{alg - partition}.
\begin{algorithm}
	\caption{Distributed Decomposition, \emph{code for vertex $v$}}
	\label{alg - partition}
	\begin{algorithmic}[1]
		\State{$0<\beta<1, k>2.$}
		\State{choose $\delta_v$ at random from $Exp(\beta)$}
		\State $center=id(v)$
		\State $dist_{\delta_{min}} = -\delta_v$
		\For{$\frac{k\log n}{\beta}$ iterations}
		\State send $(dist_{\delta_{min}},center)$
		\For {every $(dist_{\delta_{min}}^{'} ,center^{'})$ received from $u\in N(v)$}
		\If  {$\big(dist_{\delta_{min}}^{'} + 1 < dist_{\delta_{min}}\big)$ OR $\big((dist_{\delta_{min}}^{'} + 1 = dist_{\delta_{min}})$ AND $(center^{'}<center)\big)$}
		\State $center \leftarrow center^{'}$ 
		\State $dist_{\delta_{min}}\leftarrow dist_{\delta_{min}}^{'}+1$
		\EndIf
		\EndFor
		\EndFor
		\State{output $center$}
	\end{algorithmic}
\end{algorithm}
We prove that the fraction of edges between different components is small. In order to do so, we bound the probability of an edge to be between components, i.e., the probability that the endpoints of the edge choose different centers.
We consider two cases for an edge $e=(u,v)$. In the first case, we assume that both $u$ and $v$ choose the center that minimizes their shifted distance, $dist_\delta$, over all the vertices in the graph. In other words, if the algorithm allowed each vertex to learn the entire graph, they would choose the same center as they did in the current algorithm. In the second case, we assume that at least one of $u$ and $v$ chooses differently if given a larger neighborhood.

Define the \emph{ideal} center of a vertex $v$ as $argmin_{w\in V}dist_\delta(w,v)$. In the following lemma we bound from above the probability that a vertex does not choose its ideal center.
\negA
\begin{lemma}\label{center distance}
	Let $v'$ be the ideal center of vertex $v$, then the probability that $dist(v',v)>\frac{k\log n}{\beta}$, i.e., vertex $v$ does not join its ideal center, is at most $\frac{1}{n^k}$.
\end{lemma}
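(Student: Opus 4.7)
The plan is to show that if the ideal center of $v$ is far from $v$, then the exponential variable associated with that center must be unusually large, and then to bound the probability of this happening by a union bound over all candidate centers combined with the tail bound of the exponential distribution.

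First I would exploit the defining inequality of the ideal center. By definition, $v' = \arg\min_{w \in V} dist_{\delta}(w,v) = \arg\min_{w \in V} \bigl( dist(w,v) - \delta_w \bigr)$. Comparing $v'$ against the candidate $w=v$ itself yields
\[
dist(v',v) - \delta_{v'} \;\le\; dist(v,v) - \delta_v \;=\; -\delta_v,
\]
which rearranges to $\delta_{v'} \;\ge\; dist(v',v) + \delta_v \;\ge\; dist(v',v)$, since $\delta_v \ge 0$. In particular, on the event $\{dist(v',v) > \tfrac{k \log n}{\beta}\}$ we must have $\delta_{v'} > \tfrac{k \log n}{\beta}$.

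Next I would translate this structural observation into a probability bound. The identity of $v'$ is random, so I bound the event $\{dist(v',v) > \tfrac{k\log n}{\beta}\}$ by the event $\{\exists\, w \in V : \delta_w > \tfrac{k\log n}{\beta}\}$. Using the tail of $\mathrm{Exp}(\beta)$, namely $\Pr[\delta_w > t] = e^{-\beta t}$, with $t = \tfrac{k \log n}{\beta}$ we get $\Pr[\delta_w > t] = e^{-k\log n} = n^{-k}$ for each individual $w$, and then a union bound over the $n$ vertices yields the claimed high-probability bound (tightening the constant $k$ by $1$ if one wants the exponent written exactly as $\tfrac{1}{n^k}$ rather than $\tfrac{n}{n^k}$).

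The main subtlety I expect is the first step: one must be careful that $v'$ is defined as a global minimizer over all of $V$ (the ideal center, as opposed to the center found by the bounded-radius algorithm), so that the comparison with $w = v$ is legal and forces $\delta_{v'}$ to carry all the ``slack'' needed to bridge the distance. Once that structural inequality is in place, the rest is a routine application of the exponential tail bound together with a union bound, and the constant $k$ can be chosen to absorb the extra factor of $n$ arising from the union bound so that the final probability meets the stated $\tfrac{1}{n^k}$ threshold.
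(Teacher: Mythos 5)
Your proof is essentially correct, and its first half is exactly the paper's argument: compare the ideal center $v'$ against the candidate $w=v$ to get $dist(v',v)-\delta_{v'}\le -\delta_v\le 0$, hence $dist(v',v)\le\delta_{v'}$, so the bad event forces $\delta_{v'}>\frac{k\log n}{\beta}$. Where you diverge is the tail-bound step. The paper simply applies the $\mathrm{Exp}(\beta)$ tail to $\delta_{v'}$ itself, $\Pr\left[\delta_{v'}>\frac{k\log n}{\beta}\right]=e^{-k\log n}=n^{-k}$, which yields the stated $\frac{1}{n^k}$ in one line; you instead observe (correctly) that $v'$ is a random index determined by all the $\delta_w$'s, so you pass to the event $\{\exists\, w:\delta_w>\frac{k\log n}{\beta}\}$ and union bound over the $n$ vertices, obtaining $\frac{n}{n^k}=\frac{1}{n^{k-1}}$. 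The trade-off: your route is the more scrupulous one (the paper's one-liner quietly treats $\delta_{v'}$ as a fixed exponential variable even though the argmin is biased toward large $\delta_w$), but it proves a bound weaker by a factor of $n$ than the lemma as stated, which you acknowledge by offering to shift $k$ by one. This discrepancy is harmless downstream -- Corollary \ref{corollary - distributed equals parallel w.h.p} itself union-bounds over all vertices and only needs a $1-1/\mathrm{poly}(n)$ guarantee with $k>2$ a free parameter -- but as written your argument establishes $\frac{1}{n^{k-1}}$, not $\frac{1}{n^k}$, so either state the lemma with the adjusted exponent or note explicitly that replacing $k$ by $k+1$ recovers the paper's constant.
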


\begin{proof}
	Since $v'$ is the ideal center of vertex $v$, we have that $dist_\delta(v',v) \leq dist_\delta(v,v)$. Therefore, $dist(v',v)-\delta_{v'} \leq dist(v,v)-\delta_v = -\delta_v \leq 0,$ which implies that $dist(v',v) \leq \delta_{v'}$. That is, the distance between each vertex $v$ to its ideal center $v'$ is bounded from above by $\delta_{v'}$, and hence $\Pr\left[dist(v',v)>\frac{k\log n}{\beta}\right]\leq\Pr\left[\delta_{v'}>\frac{k\log n}{\beta}\right]$.
	Using the cumulative exponential distribution, we have that
	$\Pr\left[\delta_{v'}>\frac{k\log n}{\beta}\right]=\exp\left(-\frac{k\cdot \beta\log n}{\beta}\right)=\exp\left(-k\log n\right) \ \leq \frac{1}{n^k}$.
	\hfill \qed
\end{proof}

\begin{corollary}\label{corollary - distributed equals parallel w.h.p}
	The Distributed Decomposition algorithm generates a decomposition identical to the decomposition created by the parallel decomposition algorithm with probability at least $1-\frac{1}{n^{k-1}}$
\end{corollary}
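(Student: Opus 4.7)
The plan is to derive the corollary from Lemma \ref{center distance} by a straightforward union bound. Recall that the parallel algorithm of Miller et al.\ assigns every vertex to its \emph{ideal} center $\mathrm{argmin}_{w \in V} dist_\delta(w,v)$, whereas the distributed version only allows $v$ to search within its $\frac{k \log n}{\beta}$-neighborhood. So the two decompositions coincide exactly when, for every vertex $v$, the ideal center $v'$ lies within graph distance $\frac{k \log n}{\beta}$ of $v$; in that case the distributed algorithm explores far enough to find $v'$, and the tie-breaking rule on $(dist_{\delta_{\min}}, center)$ selects the same vertex as the parallel one.

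First I would fix a vertex $v$ and invoke Lemma \ref{center distance}, which bounds the probability that $v$ fails to reach its ideal center by $\frac{1}{n^k}$. Then I would take a union bound over all $n$ vertices of $G$, obtaining
\[
\Pr\bigl[\exists\, v \in V : dist(v', v) > \tfrac{k\log n}{\beta}\bigr] \;\le\; n \cdot \frac{1}{n^k} \;=\; \frac{1}{n^{k-1}}.
\]
Complementing this event yields that with probability at least $1 - \frac{1}{n^{k-1}}$ every vertex finds its ideal center within the explored neighborhood, and hence the distributed decomposition agrees vertex-by-vertex with the parallel one.

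The only subtlety worth flagging is to verify that, conditioned on every ideal center being reachable, the distributed algorithm's relaxation/flooding procedure actually converges to the ideal assignment within the $\frac{k \log n}{\beta}$ iterations of the main loop. This follows because the update rule in Algorithm \ref{alg - partition} is a Bellman--Ford style relaxation on the shifted distance, which after $r$ rounds correctly computes, for each $v$, the minimizer of $dist_\delta(\cdot,v)$ over all candidate centers at hop-distance at most $r$ (with identifier-based tie-breaking matching the parallel rule). Plugging in $r = \frac{k \log n}{\beta}$ and using the high-probability event above gives the desired bound, completing the proof. No nontrivial obstacle arises beyond bookkeeping for ties, which the algorithm handles explicitly via the lexicographic comparison on $(dist_{\delta_{\min}}, center)$.
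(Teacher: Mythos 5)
Your proposal is correct and follows essentially the same route as the paper: apply Lemma \ref{center distance} to each vertex and take a union bound over all $n$ vertices to get failure probability at most $n\cdot\frac{1}{n^k}=\frac{1}{n^{k-1}}$. The extra verification that the Bellman--Ford style flooding with tie-breaking recovers the ideal assignment within $\frac{k\log n}{\beta}$ rounds is a reasonable addition that the paper leaves implicit.
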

\comment{
\begin{proof}
	Applying the union bound, 
	we have that the probability that at least one of the vertices does not choose its ideal center is at most $\frac{1}{n^{k-1}}$.\qed
\end{proof}
}
Define an \emph{exterior} edge as an edge connecting different vertex components, and let $F$ denote the set of exterior edges. Let $A_{u,v}$ denote the event that both $u$ and $v$ choose their ideal centers.

\begin{lemma}\label{beta bound}
	The probability that an edge $e=(u,v)$ is an exterior edge, given that $u$ and $v$ choose their ideal centers, is at most $\beta$.
\end{lemma}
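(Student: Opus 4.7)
The plan is to reduce the event ``edge $(u,v)$ is exterior'' (conditioned on $A_{u,v}$) to a near-tie among the shifted distances from $u$, and then to bound the probability of such a near-tie using the memoryless property of the exponential distribution, in the spirit of the Miller--Peng--Xu analysis.

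For each $w\in V$, write $c_u(w)=dist(w,u)-\delta_w$ and $c_v(w)=dist(w,v)-\delta_w$. Under $A_{u,v}$, the ideal centers of $u$ and $v$ are $w_u^\star=\arg\min_w c_u(w)$ and $w_v^\star=\arg\min_w c_v(w)$, and the edge is exterior iff $w_u^\star\neq w_v^\star$. Since $u$ and $v$ are neighbours, the triangle inequality gives $|dist(w,u)-dist(w,v)|\leq 1$, so $c_v(w_u^\star)-c_v(w)\leq(c_u(w_u^\star)-c_u(w))+2$ for every $w$. Hence if $c_u(w)-c_u(w_u^\star)>2$ for every $w\neq w_u^\star$, then $w_u^\star$ also minimises $c_v$ and the edge is interior. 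It therefore suffices to bound the probability that the second-smallest value of $c_u$ lies within $2$ of the smallest.

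To bound this ``gap $\leq 2$'' probability, I would reveal the $\delta_w$'s in order of increasing $c_u(w)$, condition on the identity and value $t$ of the first arrival $w_u^\star$, and invoke memorylessness: for every other $w$, the event $c_u(w)>t$ is equivalent to $\delta_w<dist(w,u)-t$, and the residual distribution of $\delta_w$ remains exponential with rate $\beta$. A careful accounting, matching the MPX analysis of~\cite{miller2013parallel}, then bounds the probability that a second arrival lands in the window $(t,t+2]$ by $\beta$, yielding the lemma.

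The main obstacle is the sharpness of the final probability bound: a crude argument via a union bound over all $w\neq w_u^\star$, or a naive use of the memoryless property, only delivers $1-e^{-2\beta}\approx 2\beta$, which is too loose by a factor of $2$. Getting the claimed bound of exactly $\beta$ requires exploiting that only ``oriented'' near-ties at $u$ -- those for which the relative shift between $c_u(\cdot)$ and $c_v(\cdot)$ actually flips the argmin -- contribute to the exterior event, which eliminates roughly half of the candidate configurations and matches the tight two-vertex computation where an explicit calculation gives probability $1-e^{-\beta}\leq\beta$.
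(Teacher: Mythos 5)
There is a genuine gap, and you have actually pointed at it yourself: your reduction works at a single endpoint $u$ and therefore only shows that a cut edge forces the smallest and second-smallest values of $c_u(\cdot)$ to be within $2$ of each other. Your memorylessness argument (which is sound, and is essentially Lemma 4.3 of Miller--Peng--Xu: for any shifts, $\Pr[\text{gap} > c] \ge e^{-\beta c}$, hence $\Pr[\text{gap} \le c] \le 1-e^{-\beta c} \le \beta c$) then gives only $1-e^{-2\beta}\le 2\beta$. The sentence claiming that ``a careful accounting \ldots bounds the probability that a second arrival lands in the window $(t,t+2]$ by $\beta$'' is simply false as stated, and the closing remark about ``oriented near-ties eliminating roughly half of the configurations'' is an assertion, not an argument: if $w_1,w_2$ are the two competing centers and $a=dist(w_1,u)-dist(w_1,v)$, $b=dist(w_2,u)-dist(w_2,v)\in\{-1,0,1\}$, the exterior event forces the gap at $u$ to lie in $[0,\,b-a]$, and turning this case-dependent window into a uniform bound of $\beta$ is exactly the step you have not supplied. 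The missing idea --- which is the heart of the argument in \cite{miller2013parallel}, the source the paper itself defers to --- is to symmetrize via the midpoint $x$ of the edge $(u,v)$: since $|dist(w,x)-dist(w,u)|\le 1/2$ and $|dist(w,x)-dist(w,v)|\le 1/2$, the two inequalities ``$w_1$ wins at $u$'' and ``$w_2$ wins at $v$'' together imply that both $\delta_{w_1}-dist(w_1,x)$ and $\delta_{w_2}-dist(w_2,x)$ are within $1$ of the maximum shifted value at $x$, so the tie window has length $1$ rather than $2$, and the order-statistics bound gives $1-e^{-\beta}\le\beta$.

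Two further remarks. First, the paper proves this lemma only by citation to \cite{miller2013parallel}, so your attempt to reconstruct the argument is in the right spirit and the probabilistic machinery you invoke is the correct one; it is only the constant that your endpoint-only reduction cannot reach. (Note also that both you and the paper quietly replace the conditional probability given $A_{u,v}$ by the unconditional probability that the \emph{ideal-center} assignment cuts the edge; under $A_{u,v}$ the two assignments coincide, but the conditioning itself is not handled.) Second, the weaker bound $2\beta$ you do obtain would in fact suffice for the paper's downstream use (Theorem~\ref{Partition} only needs $O(\beta m)$ exterior edges in expectation), but it does not prove the lemma as stated, whose claim is $\beta$.
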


The lemma follows directly from \cite{miller2013parallel}, where indeed the algorithm assigns to each vertex its ideal center.
We can now bound the probability of any edge to be an exterior edge.

\begin{lemma}\label{crossing edge probability}
	The probability that an edge $e=(u,v)$ is in $F$ is at most $\beta+\frac{2}{n^k}$.
\end{lemma}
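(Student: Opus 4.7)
The plan is to condition on the event $A_{u,v}$ that both endpoints of $e=(u,v)$ choose their ideal centers, and apply the law of total probability. Writing
\[
\Pr[e\in F] \;=\; \Pr[e\in F \mid A_{u,v}]\Pr[A_{u,v}] \;+\; \Pr[e\in F \mid \neg A_{u,v}]\Pr[\neg A_{u,v}],
\]
I would bound the first term using Lemma~\ref{beta bound}, which gives $\Pr[e\in F \mid A_{u,v}] \leq \beta$, and trivially $\Pr[A_{u,v}] \leq 1$. For the second term I would use the trivial bound $\Pr[e\in F \mid \neg A_{u,v}] \leq 1$ together with a bound on $\Pr[\neg A_{u,v}]$.

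To bound $\Pr[\neg A_{u,v}]$, I would apply the union bound over the two endpoints: the event $\neg A_{u,v}$ is the union of the event that $u$ does not choose its ideal center and the event that $v$ does not choose its ideal center. By Lemma~\ref{center distance}, each of these events has probability at most $1/n^k$, and therefore $\Pr[\neg A_{u,v}] \leq 2/n^k$. Combining the two estimates yields $\Pr[e\in F] \leq \beta + 2/n^k$, which is exactly the claimed bound.

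There is no real obstacle here; the proof is essentially a two-line case analysis that packages together Lemmas \ref{center distance} and \ref{beta bound}. The only point worth a moment's thought is making sure the conditioning is used correctly: Lemma~\ref{beta bound} is exactly a statement about $\Pr[e\in F \mid A_{u,v}]$, so no independence assumption between $\delta_u$ and $\delta_v$ beyond what is already invoked in \cite{miller2013parallel} is needed, and the union bound for $\Pr[\neg A_{u,v}]$ holds without any independence assumption at all.
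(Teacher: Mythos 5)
Your proof is correct and follows exactly the same route as the paper's: conditioning on $A_{u,v}$, bounding $\Pr[e\in F\mid A_{u,v}]\leq\beta$ via Lemma~\ref{beta bound}, and bounding $\Pr[\bar A_{u,v}]\leq 2/n^k$ by a union bound over the two endpoints using Lemma~\ref{center distance}. No gaps; nothing further is needed.
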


\begin{proof}
%
		Note that $$\Pr\left[(u,v)\in F\right] = \Pr\left[(u,v)\in F\big|A_{u,v}\right]\Pr\left[A_{u,v}\right]+\Pr\left[(u,v)\in F\big|\bar A_{u,v}\right]\Pr\left[\bar A_{u,v}\right].$$ By Lemma \ref{beta bound}, $\Pr\left[(u,v)\in F\big|A_{u,v}\right]\leq\beta$. Applying the union bound on the result of Lemma \ref{center distance}, we have that $\Pr\left[\bar A_{u,v}\right] \leq \frac{2}{n^k}$.
	Therefore $\Pr\left[(u,v)\in F\right] = \Pr\left[(u,v)\in F\big|A_{u,v}\right] \Pr\left[A_{u,v}\right]+\Pr\left[(u,v)\in F\big|\bar A_{u,v}\right]\Pr\left[\bar A_{u,v}\right]\leq \beta\cdot \Pr\left[A_{u,v}\right]+\Pr\left[(u,v)\in F\big|\bar A_{u,v}\right]\cdot \frac{2}{n^k} \leq \beta + \frac{2}{n^k}.$
\end{proof}

We can now prove the guarantees of the Distributed Decomposition algorithm. 
Recall that the \emph{weak diameter} of a set $S=\{u_1,u_2,...u_l\}$ is defined as $\max_{(u_i,u_j)\in S} dist(u_i,u_j)$.

\begin{theorem} \label{Partition}
	The Distributed Decomposition algorithm requires $O(\frac{\log n}{\beta})$ communication rounds in the $\mathcal{CONGEST}$ model, and partitions the graph into components such that in expectation there are $O(\beta m)$ exterior edges. Each of the component is of weak diameter $O(\frac{\log n}{\beta})$, and with high probability also of strong diameter $O(\frac{\log n}{\beta})$.
\end{theorem}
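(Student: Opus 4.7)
The plan is to attack the four claims of the theorem (rounds, expected number of exterior edges, weak diameter, strong diameter) one by one, reusing Lemmas \ref{center distance}--\ref{crossing edge probability} and Corollary \ref{corollary - distributed equals parallel w.h.p} that have already been established in the excerpt.

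\textbf{Round complexity and CONGEST compliance.} First I would simply read the round count off the pseudocode: the main loop runs $\frac{k\log n}{\beta}$ times with one send per iteration, giving $O(\log n/\beta)$ rounds. For fitness in $\mathcal{CONGEST}$ I must verify that each message $(dist_{\delta_{\min}},center)$ fits in $O(\log n)$ bits. The identifier $center$ is $O(\log n)$ bits by assumption. The distance field starts at $-\delta_v$ and is only ever updated to a value within the $\frac{k\log n}{\beta}$-neighborhood of the current vertex, so at every round $|dist_{\delta_{\min}}|=O(\log n/\beta)$; representing $-\delta_v$ to the precision needed to break ties (in particular, preserving the ordering used in the algorithm) takes $O(\log n)$ bits when $\beta\ge 1/\mathrm{poly}(n)$, which is the regime of interest. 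Hence each message is $O(\log n)$ bits.

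\textbf{Expected number of exterior edges.} This is where I would use Lemma~\ref{crossing edge probability} together with linearity of expectation:
\begin{equation*}
\mathbb{E}[|F|]\;=\;\sum_{e\in E}\Pr[e\in F]\;\le\;m\!\left(\beta+\tfrac{2}{n^k}\right).
\end{equation*}
For any constant $k\ge 2$ the second term contributes at most $O(1/n^{k-1})$ edges in expectation and is therefore dominated by $\beta m$ in every interesting parameter regime (in particular whenever $\beta\ge 1/n^{k-2}$, otherwise the claim is vacuous since the bound exceeds $m$); so $\mathbb{E}[|F|]=O(\beta m)$.

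\textbf{Weak diameter.} Here the key observation is structural and deterministic: by construction, the $center$ variable at a vertex $v$ is always the identifier of some vertex at graph-distance at most $\frac{k\log n}{\beta}$ from $v$, because $v$ only advances through $\frac{k\log n}{\beta}$ BFS-like relaxation rounds. Hence two vertices $u,v$ belonging to the same component share a center $w$ with $\mathrm{dist}(u,w),\mathrm{dist}(v,w)\le\frac{k\log n}{\beta}$, and the triangle inequality gives $\mathrm{dist}(u,v)\le\frac{2k\log n}{\beta}=O(\log n/\beta)$. This bound holds with probability~$1$.

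\textbf{Strong diameter.} This is the subtlest part, because being in the same cluster does not by itself force a short path that stays inside the cluster. My plan is to invoke Corollary~\ref{corollary - distributed equals parallel w.h.p}: with probability $\ge 1-n^{-(k-1)}$ every vertex ends up joined to its ideal center, and on that event the output is identical to the partition produced by the parallel algorithm of Miller et al.~\cite{miller2013parallel}, whose clusters are known to have strong diameter $O(\log n/\beta)$. On this high-probability event, therefore, each cluster admits a spanning subgraph (the shifted-distance BFS tree rooted at the ideal center, contained entirely inside the cluster) of depth $O(\log n/\beta)$, yielding the claimed strong-diameter bound. I expect this last step to be the main obstacle to write carefully: one must check that the ideal-center tree really is contained in the cluster (which follows from the definition of ideal center applied to every intermediate vertex on the path) and that the high-probability event from Corollary~\ref{corollary - distributed equals parallel w.h.p} is indeed the only place randomness needs to be invoked for the strong-diameter claim.
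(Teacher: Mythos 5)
Your proof is correct and follows essentially the same route as the paper: read off the round complexity from the loop bound, bound $\mathbb{E}[|F|]$ via linearity of expectation and Lemma~\ref{crossing edge probability}, get the weak diameter deterministically from the fact that centers are chosen within the $\frac{k\log n}{\beta}$-neighborhood, and obtain the strong diameter w.h.p.\ by invoking Corollary~\ref{corollary - distributed equals parallel w.h.p} and the guarantees of the parallel algorithm of \cite{miller2013parallel}. The only nitpicks are cosmetic: the additive term is $2m/n^{k}=O(n^{2-k})$ (so you need $k>2$ as the algorithm stipulates, and the exponent is $k-2$, not $k-1$), and your extra discussion of message size in the $\mathcal{CONGEST}$ model is a welcome detail the paper leaves implicit.
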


\begin{proof}
	Clearly, as every vertex chooses a center from its $\frac{k\log n}{\beta}$-neighborhood, the distance between two vertices that choose the same center, i.e., belong to the same component, over the graph $G$ is at most $O(\frac{\log n}{\beta})$. Therefore, the weak diameter of every component is at most $O(\frac{\log n}{\beta})$. 
	As we proved in Corollary \ref{corollary - distributed equals parallel w.h.p}, with probability at least $1-\frac{1}{n^{k-1}}$ the algorithm creates a partition identical to the one created by the parallel algorithm, and therefore with the exact same properties, which implies that the strong diameter of every component is at most $O(\frac{\log n}{\beta})$ as well.
	
	Using the linearity of expectation, and Lemma \ref{crossing edge probability} we have that $\mathbb{E}\left[|F|\right]\leq\sum_{e\in E}^{}\left(\beta+\frac{2}{n^k}\right) = \beta m+\frac{2m}{n^k}.$
	Since $m\leq n^2$, we have that for every $k>2$, $\mathbb{E}\left[|F|\right] \leq O(\beta m)$.
	Finally, as can be seen from the code, the algorithm requires $O(\frac{\log n}{\beta})$ communication rounds.\qed
\end{proof}

\mysubsection{A Randomized $(1-\epsilon)$-Approximation Algorithm for Max-Cut on a Bipartite Graph}
Clearly, in a bipartite graph the maximum cut contains all of the edges. Such a cut can be found by selecting arbitrarily a root vertex, and then simply putting all the vertices of odd depth in one set and all the vertices of even depth in the complementary set. However, this would require a large computational time in our model, that depends on the diameter of the graph. We overcome this by using the above decomposition,
and finding an optimal solution within each connected component. In each component $C$, we find an optimal solution in $O(D_c)$ communication rounds, where $D_c$ is the diameter of $C$.
First, the vertices in each component search for the vertex with the lowest id. \footnote{This can be done 
by running a BFS in parallel from all vertices.
Each vertex propagates the information from the root with lowest id it knows so far, and joins its tree. Thus, at the end of the process, we have a BFS tree rooted at the vertex with the lowest id.}Second, the vertex with the lowest id joins $S$ or $\bar{S}$ with equal probability and sends its decision to its neighbors. When a vertex receives a message from one of its neighbors, it joins the opposite set, outputs its decision, and sends it to its neighbors.
Since finding the optimal solution within each component does not require learning the entire component topology, the algorithm is applicable to the more restricted $\mathcal{CONGEST}$ model.
The algorithm yields a $(1-\epsilon)$-approximation for the Max-Cut problem on a bipartite graph in $O(\frac{\log n}{\epsilon})$ communication rounds with high probability.

\begin{theorem}\label{theorem - bipartite max-cut}
	Bipartite Max-Cut is a randomized $(1-\epsilon)$-approximation for Max-Cut, requiring $O(\frac{\log n}{\epsilon})$ communication rounds in the $\mathcal{CONGEST}$ model w.h.p.
\end{theorem}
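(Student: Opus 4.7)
The plan is to invoke the Distributed Decomposition of Algorithm \ref{alg - partition} with parameter $\beta := c\epsilon$ for a sufficiently small absolute constant $c$, and then, inside each resulting cluster, to produce an optimal cut by exploiting bipartiteness. By Theorem \ref{Partition}, the decomposition takes $O(\log n / \epsilon)$ rounds, produces clusters of strong diameter $O(\log n / \epsilon)$ w.h.p., and in expectation leaves only $\mathbb{E}[|F|] = O(\epsilon m)$ inter-cluster edges. After one additional round of exchanging centers, each vertex learns which of its neighbors share its cluster, so from that point on every transmission is strictly intra-cluster and different clusters use disjoint edges.

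Inside every cluster $C$ I would run a parallel leader-election-by-BFS in which each vertex repeatedly forwards the minimum $(\mathrm{id},\mathrm{dist})$ pair it has seen so far, using only $O(\log n)$-bit messages. After $O(\mathrm{diam}(C))$ rounds the minimum-ID vertex $r_C$ has been elected and an implicit BFS tree rooted at $r_C$ is in place; $r_C$ then flips a fair coin, outputs the resulting bit, and the bit is broadcast down the tree, every descendant outputting the opposite of its parent. Because every subgraph of a bipartite graph is bipartite, $C$ is bipartite, and the parity-of-depth 2-coloring just described is a proper 2-coloring of $C$, so every interior edge of $C$ is placed in the cut.

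The approximation guarantee then follows immediately from the additivity of the cut across edge-disjoint pieces. Let $\mathrm{ALG}$ denote the size of the output cut and $F$ the set of inter-cluster edges. Since the algorithm cuts every interior edge, $\mathrm{ALG} \ge m - |F|$, and since $G$ is bipartite we have $\mathit{OPT} = m$, so by Theorem \ref{Partition} and linearity of expectation,
\[
\mathbb{E}[\mathrm{ALG}] \;\ge\; m - \mathbb{E}[|F|] \;\ge\; m - O(\epsilon m) \;\ge\; (1-\epsilon)\,\mathit{OPT},
\]
once $c$ is chosen to absorb the hidden constant. The round count is $O(\log n / \epsilon)$ with high probability, dominated by the decomposition and by the intra-cluster leader election and coloring broadcast, each of which runs in time proportional to the strong cluster diameter. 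The main obstacle I anticipate is staying within the $\mathcal{CONGEST}$ bandwidth: a naive approach that gathers cluster topology at a leader would blow up the message size, but the observation above, that a single $O(\log n)$-bit pair per edge per round suffices during leader election and a single bit suffices during the color broadcast, together with the vertex-disjointness of clusters, means that concurrent per-cluster computations impose no additional congestion.
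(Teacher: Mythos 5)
Your proposal is correct and follows essentially the same route as the paper: decompose with the Distributed Decomposition at $\beta=\Theta(\epsilon)$, elect a minimum-ID root and build a BFS tree per cluster, 2-color by parity of depth so every intra-cluster edge is cut, and charge the loss to the $O(\epsilon m)$ expected exterior edges while using $\mathit{OPT}=m$ for bipartite graphs. Your explicit accounting of the $O(\log n)$-bit message sizes during leader election and broadcast is a slightly more careful justification of the $\mathcal{CONGEST}$ claim than the paper's brief remark, but the algorithm and analysis are the same.
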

\begin{proof}
	After applying the decomposition algorithm, we have that each connected component $C$ has a diameter $D_c$ of at most $O(\frac{\log n}{\beta})$  w.h.p (Theorem \ref{Partition}). Building a BFS tree in a component $C$ clearly takes $O(D_c)$ communication rounds. Assigning the vertices to sets after constructing a tree takes $O(D_c)$ as well. Therefore, the algorithm finds an optimal solution in each of the components in $O(\frac{\log n}{\epsilon})$ communication rounds. Since every connected component is a bipartite graph itself, all the edges within it are in the cut. Therefore, as there are at most $O(\beta m)$ exterior edges, the algorithm obtain a cut of at least $(1-\beta)m$ edges. Since the optimal cut in a bipartite graph contains all the edges, the algorithm achieves a $(1-\beta)$-approximation. Choosing $\beta = \epsilon$ proves the theorem.\qed
\end{proof}

\negA
\begin{algorithm}
	\caption{Bipartite Max-Cut} 
	\label{alg1} 
	\begin{algorithmic}[1]
		\State{G=(V,E)}
		\State{apply Distributed Decomposition on G, with $\beta=\epsilon,k>2$}
		\For{each component $C$ obtained by the decomposition}
		\State{build a BFS tree from the vertex $v$ with the lowest id}
		\State{assign $v$ to $S$ or $\bar{S}$ with equal probability, assign the rest of the vertices to alternating sides}
		\EndFor
	\end{algorithmic}
\end{algorithm}
\negA
\mysubsection{A Randomized $(1-\epsilon)$-Approximation Algorithm for General Graphs} 
We present below a $(1-\epsilon)$-approximation algorithm for Max-Cut in general graphs, using $O(\frac{\log n}{\epsilon})$ communication rounds. As before, the algorithm consists of two parts, decomposition and solution. Although the decomposition part itself works even in the $\mathcal{CONGEST}$ model, the algorithm works in the $\mathcal{LOCAL}$ model, since for general graphs the components created by the decomposition are not necessarily sparse, and learning the components topology is expensive in the $\mathcal{CONGEST}$ model. 

\begin{algorithm} 
	\caption{Decomposition-Based Max-Cut}
	\label{alg 1-beta}
	\begin{algorithmic}[1]
		\State{G=(V,E)}
		\State{apply Distributed Decomposition on G, with $\beta=\epsilon/2,k>2$}
		\For{each component $C$ obtained by the decomposition}
		\State{gather the component topology at the vertex $v\in C$ with the lowest id.}
		\State{let $v$ find an optimal solution and determine the value output by the component's vertices.}
		\EndFor
	\end{algorithmic}
\end{algorithm}
\negA
\begin{theorem}
	Decomposition-Based Max-Cut is a randomized $(1-\epsilon)$-approximation for Max-Cut, requiring $O(\frac{\log n}{\epsilon})$ communication rounds in the $\mathcal{LOCAL}$ model.
\end{theorem}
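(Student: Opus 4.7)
The plan is to bound the round complexity and the approximation ratio separately, with $\beta = \Theta(\epsilon)$ fed into the call to the Distributed Decomposition.

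For the round complexity, by Theorem~\ref{Partition} the decomposition itself uses $O(\log n / \beta)$ rounds and, with high probability, yields components of strong diameter $O(\log n/\beta)$. Because we are in the $\mathcal{LOCAL}$ model, where message sizes and local computation are unlimited, inside each component $C$ the lowest-id vertex can collect the induced topology via a pipelined broadcast in $O(D_C)$ rounds, solve Max-Cut on it locally at no cost, and broadcast each vertex's output label back in another $O(D_C)$ rounds. Summing these contributions gives the claimed $O(\log n / \epsilon)$ round complexity.

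For the approximation ratio, I would lean on the additivity of the cut function across edge-disjoint vertex subsets that the preliminaries highlight. Fix an optimum $S^*$ with $f(S^*) = \mathrm{OPT}$, let $\mathcal{C}$ be the (random) collection of components produced by the decomposition, and let $F$ be the set of exterior edges. Writing $f_C$ for the cut function on the subgraph induced by a component $C$, additivity gives
\[
\mathrm{OPT} \;=\; \sum_{C\in\mathcal{C}} f_C(S^* \cap C) \;+\; \bigl|E(S^*,\overline{S^*})\cap F\bigr| \;\leq\; \sum_{C\in\mathcal{C}} f_C(S^* \cap C) \;+\; |F|.
\]
Since the algorithm's output $S$ is optimal inside every component, $f_C(S\cap C)\geq f_C(S^*\cap C)$ for every $C$, and exterior edges that happen to be cut only contribute positively, so
\[
\mathrm{ALG} \;\geq\; \sum_{C\in\mathcal{C}} f_C(S\cap C) \;\geq\; \sum_{C\in\mathcal{C}} f_C(S^*\cap C) \;\geq\; \mathrm{OPT} - |F|.
\]
Taking expectations and invoking $\mathbb{E}[|F|] = O(\beta m)$ from Theorem~\ref{Partition}, together with the folklore bound $\mathrm{OPT}\geq m/2$, yields $\mathbb{E}[\mathrm{ALG}] \geq (1-O(\beta))\cdot\mathrm{OPT}$, so choosing $\beta$ to be a sufficiently small constant times $\epsilon$ (concretely $\beta=\epsilon/2$ once the hidden constant is tracked) closes the bound.

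The one subtle point I anticipate is bridging the gap between ``few exterior edges relative to $m$'' and ``small loss relative to $\mathrm{OPT}$'': the decomposition controls $|F|$ only as a fraction of $m$, whereas the approximation guarantee must be phrased in terms of $\mathrm{OPT}$, which a priori could be much smaller than $m$. The standard inequality $\mathrm{OPT}\geq m/2$ bridges this gap at a constant-factor cost, which is exactly what forces the rescaling of $\beta$ inside the algorithm and preserves the $O(\log n / \epsilon)$ round count.
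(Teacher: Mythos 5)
Your proposal is correct and follows essentially the same route as the paper: optimality within each component plus additivity gives $\mathrm{ALG} \geq \mathrm{OPT} - |F|$, then $\mathbb{E}[|F|] = O(\beta m)$ and $\mathrm{OPT} \geq m/2$ convert the loss into a $(1-\epsilon)$ factor with $\beta = \epsilon/2$, exactly as in the paper. The only cosmetic difference is that the paper charges the gathering step to the \emph{weak} diameter bound (which holds deterministically and suffices in the $\mathcal{LOCAL}$ model), whereas you invoke the strong-diameter guarantee that holds only w.h.p.; this does not affect correctness.
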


\begin{proof}
	Let $\mathit{OPT}(G)$ be the set of edges that belong to some maximum cut in $G$, and let $ALG(G)$ be the set of edges in the cut obtained by Decomposition-Based Max-Cut. Let $S_u$ be the component induced by the vertices which choose $u$ as their center, and denote by $S$ the set of components that algorithm Distributed Decomposition constructs. Then $\mathbb{E}\left[|ALG(G)|\right]\geq \mathbb{E} \left[\sum_{S_u\in S}|\mathit{OPT}(S_u)|\right] \geq |\mathit{OPT}(G)|-\beta m \geq |\mathit{OPT}(G)|-2\beta |\mathit{OPT}(G)|=(1-\epsilon)|\mathit{OPT}(G)|.$
	The last inequality follows from the fact that for every graph $G$ it holds that $|\mathit{OPT}(G)|\geq \frac{m}{2}$. 
	
	The graph decomposition requires $O(\frac{\log n}{\epsilon})$ communication rounds, and outputs components with weak diameter at most $O(\frac{\log n}{\epsilon})$. Therefore, finding the optimal solution within each component takes $O(\frac{\log n}{\epsilon})$ as well. The time bound follows.\qed
\end{proof}

By taking $\beta=\epsilon/4$, one can now obtain a $(1-\epsilon)$-approximation algorithm for Max-Dicut. The difference comes from the fact that for Max-Dicut it holds that $|\mathit{OPT}(G)|\geq \frac{m}{4}$ for every graph $G$. The rest of the analysis is similar to the analysis for Max-Cut. Hence, we have
\begin{theorem}
	Decomposition-Based Max-Dicut is a randomized $(1-\epsilon)$-approximation for Max-Dicut, requiring $O(\frac{\log n}{\epsilon})$ communication rounds in the $\mathcal{LOCAL}$ model.
\end{theorem}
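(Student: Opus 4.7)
The plan is to mirror the argument for Decomposition-Based Max-Cut almost verbatim, adjusting only where the directed nature of the problem and the weaker universal lower bound on $|\mathit{OPT}(G)|$ enter. I would run exactly the two-phase scheme of Algorithm \ref{alg 1-beta}, but with the parameter set to $\beta = \epsilon/4$: first apply the Distributed Decomposition from Theorem \ref{Partition}, then, inside each cluster, have the lowest-id vertex gather the induced directed subgraph and compute an optimal Max-Dicut on it locally, instructing every vertex in its cluster which side of the bipartition to output.

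For the approximation ratio, let $F$ denote the (random) set of exterior edges and let $S$ denote the family of clusters. Since within each cluster $S_u$ the algorithm returns a directed cut of optimal size $|\mathit{OPT}(S_u)|$, and the global directed cut value is additive over edge-disjoint vertex sets, we have $|ALG(G)| \geq \sum_{S_u \in S} |\mathit{OPT}(S_u)|$. On the other hand, any fixed global optimum restricted to a single cluster is a feasible directed cut in that cluster, and the only edges of $\mathit{OPT}(G)$ lost under this restriction are exterior edges, so $\sum_{S_u \in S}|\mathit{OPT}(S_u)| \geq |\mathit{OPT}(G)| - |F|$. Taking expectations and invoking the bound $\mathbb{E}[|F|] \leq \beta m$ from Theorem \ref{Partition} then gives $\mathbb{E}[|ALG(G)|] \geq |\mathit{OPT}(G)| - \beta m$.

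To convert the additive loss $\beta m$ into a multiplicative $(1-\epsilon)$ factor, I would use the standard bound $|\mathit{OPT}(G)| \geq m/4$ for Max-Dicut, which holds because a uniformly random bipartition places each directed edge in the cut with probability $1/4$. Substituting $m \leq 4|\mathit{OPT}(G)|$ and $\beta = \epsilon/4$ yields $\mathbb{E}[|ALG(G)|] \geq |\mathit{OPT}(G)| - 4\beta |\mathit{OPT}(G)| = (1-\epsilon)|\mathit{OPT}(G)|$, as desired. This is the only place where the analysis genuinely departs from the Max-Cut case, and it explains the factor-of-two change in $\beta$.

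The round complexity argument carries over unchanged: Theorem \ref{Partition} produces the decomposition in $O(\log n / \beta)$ rounds, and with high probability each cluster has strong diameter $O(\log n / \beta)$, so collecting the topology at the lowest-id vertex, locally solving Max-Dicut there (in $\mathcal{LOCAL}$, arbitrary local computation is free), and broadcasting the decisions back each cost another $O(\log n / \beta) = O(\log n / \epsilon)$ rounds. I do not anticipate a real obstacle here: the only point demanding care is verifying that the pointwise inequality $|ALG(G)| \geq \sum_{S_u}|\mathit{OPT}(S_u)|$ remains valid in the directed setting, which is immediate from additivity of directed cut value over edge-disjoint vertex sets once exterior edges are charged to $|F|$.
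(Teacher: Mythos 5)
Your proposal is correct and follows essentially the same route as the paper: the paper derives the Max-Dicut case from the Max-Cut analysis by setting $\beta=\epsilon/4$ and replacing the bound $|\mathit{OPT}(G)|\geq m/2$ with $|\mathit{OPT}(G)|\geq m/4$, exactly as you do, with the same additivity argument over clusters and the same $O(\frac{\log n}{\epsilon})$ round count from the decomposition's diameter guarantee.
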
 



\mysection{Coloring-Based Algorithms}\label{Section - coloring based algorithms}
Many of the sequential approximation algorithms for Max-Cut perform $n$ iterations. Each vertex, in its turn, makes a greedy decision so as to maximize the solution value.
We present below distributed greedy algorithms which achieve the approximation ratios of the sequential algorithms much faster.
We first prove that the greedy decisions of vertices can be done locally, depending only on their immediate neighbors. Then we show how to parallelize the decision process, such that in each iteration an independent set of vertices completes.
The independent sets are generated using $(\Delta +1)$-coloring; then, for $(\Delta +1)$ iterations, all the vertices of the relevant color make their parallel independent decisions.
All algorithms 
 run in the $\mathcal{CONGEST}$ model.


\subsection{A Deterministic 1/2-Approximation Algorithm for Max-Cut}\label{Subsection - deterministic 1/2 Max-Cut}
We start by presenting a simple deterministic distributed greedy algorithm that yields a 1/2-approximation for Max-Cut.
The algorithm is inspired by the sequential greedy Max-Cut algorithm of \cite{sahni1976p}. The greedy Max-Cut algorithm iterates through the vertices in some arbitrary order. In iteration $i$, the algorithm decides whether to assign vertex $v_i$ to $S$ or to $\bar S$, based on which placement maximizes the cut size.
In our algorithm the process is similar, except that, in each iteration, instead of considering a single vertex, we consider an independent set of vertices. As the vertices are independent, their decisions are also independent, and the approximation ratio still holds.

To divide the vertices into independent sets, we color them using $(\Delta +1)$-colors, where $\Delta$ is the maximum degree in the graph. The best deterministic $(\Delta +1)$-coloring algorithm known in the $\mathcal{CONGEST}$ model, due to Barenboim \cite{barenboim2015deterministic}, requires $\tilde{O}(\Delta^{3/4}+\log^* n)$ communication rounds, where $\tilde{O}$ ignores polylogarithmic factors in $\Delta$ \footnote{Note that a faster $(\Delta + 1)$-coloring algorithm will not improve the running time of Distributed Greedy Max-Cut, since the running time of the algorithm depends on the number of colors, which is $\Delta + 1$.}. 

Define a coloring $C:V\rightarrow\{1,2,...,\Delta +1\}$ such that $C(v)\neq C(u)$ for every $(u,v)\in E$. Let $N_{low}(v)=\{u\mid u\in N(v)$ and $C(u)<C(v)\}$ denote the neighbors of vertex $v$ with lower colors.
In iteration $i$, all vertices with color $i$ decide in parallel whether to join $S$ or $\bar S$, depending on the choices their neighbors made in previous rounds. In order to maximize the cut locally, vertex $v$ chooses to join the subset that was chosen by the minority of its neighbors in $N_{low}(v)$.
As we show next, this guarantees the 1/2-approximation.

Algorithm \ref{alg-GreedyMaxCut} gives a pseudocode of the algorithm.

\begin{algorithm}
	\caption{Distributed Greedy Max-Cut} 
	\label{alg-GreedyMaxCut}
	\begin{algorithmic}[1]
		\State color the graph using $(\Delta +1)$ colors
		\For{i=1 to $(\Delta +1)$}
		\If{$C(v)=i$}
		\If{$|N_{low}(v)\cap S|\leq |N_{low}(v)\cap \bar S|$}
		\State{State($v$) = $S$}
		\State{send 1}
		\Else
		\State{State($v$) = $\bar S$}
		\State{send 0}
		\EndIf
		\EndIf
		\EndFor
		\State{output State}
	\end{algorithmic}
\end{algorithm}

\begin{theorem} \label{Theorem - 1/2 deterministic Max-Cut }
	The Distributed Greedy Max-Cut algorithm outputs a $\frac{1}{2}$-approximation in $\tilde{O}(\Delta + \log^* n)$ rounds.
\end{theorem}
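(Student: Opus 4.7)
The plan is to prove the two parts separately: the approximation guarantee and the round complexity.

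For the approximation ratio, I would use a simple charging argument where each edge is ``charged'' to its higher-colored endpoint. Since adjacent vertices receive different colors, for every edge $(u,v)$ exactly one of the endpoints, say $v$, satisfies $C(u) < C(v)$; thus $u \in N_{low}(v)$ and the edge is counted exactly once across the sum $\sum_v |N_{low}(v)|$, giving $\sum_v |N_{low}(v)| = m$. When it is $v$'s turn to decide (i.e., in iteration $C(v)$), every vertex in $N_{low}(v)$ has already committed to $S$ or $\bar S$, and $v$ picks the side chosen by the weak minority of $N_{low}(v)$. Hence at least $\lceil |N_{low}(v)|/2 \rceil$ of the edges between $v$ and $N_{low}(v)$ end up crossing the cut. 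Importantly, because $v$'s choice only affects edges to already-decided vertices in $N_{low}(v)$ (the remaining higher-colored neighbors make their own decisions later), these contributions over different $v$ do not double-count or conflict. Summing over $v$ yields a cut of size at least $\sum_v |N_{low}(v)|/2 = m/2$. Since $|\mathit{OPT}(G)| \le m$, the algorithm achieves a $\tfrac{1}{2}$-approximation.

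For the running time, the coloring phase uses Barenboim's deterministic $(\Delta+1)$-coloring algorithm, which runs in $\tilde{O}(\Delta^{3/4}+\log^*n)$ rounds in $\mathcal{CONGEST}$. The greedy phase consists of $\Delta+1$ synchronous rounds: in round $i$, every vertex of color $i$ needs only the one-bit decisions already broadcast by its neighbors with colors $<i$, and transmits its own one-bit decision, which fits comfortably in an $O(\log n)$-bit message. Summing the two phases gives $\tilde{O}(\Delta^{3/4}+\log^*n)+O(\Delta)=\tilde{O}(\Delta+\log^*n)$.

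The only subtle point, and what I would take most care to justify, is that the per-vertex greedy guarantee composes additively over all vertices without conflicts. The cleanest way to present this is to observe that for every edge $(u,v)$ with $C(u)<C(v)$, whether $(u,v)$ is in the cut is determined at the moment $v$ decides (since $u$'s side is already fixed and $v$'s side is frozen thereafter), so the charge of $\tfrac{1}{2}$ per edge to its higher-colored endpoint is realized. Once this accounting is in place, the rest is immediate and no further case analysis is needed.
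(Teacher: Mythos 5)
Your proof is correct and follows essentially the same argument as the paper: your charging of each edge to its higher-colored endpoint is exactly the paper's notion of the ``responsible'' vertex, the minority-choice guarantee yields at least half of each vertex's charged edges in the cut, and the time analysis (coloring in $\tilde{O}(\Delta^{3/4}+\log^*n)$ rounds plus $\Delta+1$ decision rounds) matches as well. Nothing further is needed.
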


\begin{proof}
	We first show that the algorithm gives a $\frac{1}{2}$-approximation. Consider an edge $e=(u,v)$, if $C(v)>C(u)$, then we say that $v$ is $responsible$ for edge $e$. Denote by $E_{resp}(v)$ the set of edges vertex $v$ is responsible for. In other words, $E_{resp}(v)$ contains the edges between $v$ and vertices in $N_{low}(v)$. Since the color groups are independent, every edge has exactly one responsible vertex, therefore $\sum_{v\in V}^{} |E_{resp}(v)| = |E|$. When vertex $v$ makes its decision, it chooses to join the set that is not chosen by at least half of its neighbors in $N_{low}(v)$, and hence adds at least half of the edges in $E_{resp}(v)$ to the cut. Upon termination of the algorithm, we have that $|E(S,\bar S)|\geq\sum_{v\in V}^{} \frac{1}{2}|E_{resp}(v)|=\frac{1}{2}|E|$. Since the size of the optimal cut cannot be larger than $|E|$, the algorithm yields a $\frac{1}{2}$-approximation.
	
	The algorithm colors the graph in $\tilde{O}(\Delta^{3/4}+\log^* n)$ rounds and iterates for $O(\Delta + 1)$ rounds, which yields the time bound.\qed
\end{proof}

\subsection{A Deterministic 1/3-Approximation Algorithm for Max-Dicut}
Next, we turn our attention to the Max-Dicut problem. Buchbinder et al. \cite{buchbinder2015tight} present a sequential deterministic greedy 1/3-approximation algorithm for maximizing unconstrained submodular functions which runs in linear time. Inspired by this algorithm, we present a distributed deterministic 1/3-approximation algorithm for Max-Dicut that requires $\tilde{O}(\Delta + \log^*n)$ communication rounds in the $\mathcal{CONGEST}$ model.

\subsubsection{The sequential Algorithm}
We first give a brief overview of the sequential algorithm of \cite{buchbinder2015tight} for Max-Dicut. Given a graph $G=(V,E)$, where $|V|=n$, the algorithm examines the vertices in an arbitrary order ${u_1,u_2,...,u_n}$. In iteration $i$, the algorithm decides greedily whether to include $u_i$ in the final solution, for $1\leq i \leq n$. The algorithm maintains two sets of vertices, $X$ and $Y$. Initially, $X_0=\emptyset$ and $Y_0=V$. At the begining of the $i$th iteration, the algorithm defines $X_i=X_{i-1},Y_i=Y_{i-1}$, then, the algorithm decides whether to add the $i$th vertex $u_i$ to $X_{i-1}$, or to remove it from $Y_{i-1}$. The decision is made by calculating the marginal profit of both options and choosing the more profitable one.
By definition, upon termination of the algorithm we have that $X_n=Y_n$, and this set is output as the solution.
Let $f(S)$ be the size of the directed cut induced by a subset of vertices $S\subseteq V$, i.e., the number of edges directed from $S$ to $\bar{S}$. Then $f$ is a non-negative submodular function.
Let $a_i$ and $b_i$ denote the marginal profit gained by adding the vertex $u_i$ to $X_{i-1}$, or removing it from $Y_{i-1}$ respectively.
Algorithm \ref{Alg-buchbinder} gives the pseudocode of the sequential algorithm of \cite{buchbinder2015tight}.

\begin{algorithm}
	\caption{Deterministic Sequential Max-Dicut}
	\label{Alg-buchbinder}
	\begin{algorithmic}[1]
		\State{$G=(V,E)$}
		\State $X_0 \leftarrow \emptyset, Y_0 \leftarrow V$
		\For{i=1 to n}
		\State $a_i \leftarrow f(X_{i-1}+u_i)-f(X_{i-1})$
		\State $b_i \leftarrow f(Y_{i-1}-u_i)-f(Y_{i-1})$
		\If{$a_i \geq b_i$}
		\State{$X_i\leftarrow X_{i-1}+u_i$}
		\State{$Y_i \leftarrow Y_{i-1}$}
		\Else
		\State{$X_i\leftarrow X_{i-1}$}
		\State{$Y_i \leftarrow Y_{i-1}-u_i$}
		\EndIf
		\EndFor
	\end{algorithmic}
\end{algorithm}

We give below a sketch of the analysis \footnote{See the details in \cite{buchbinder2015tight}.}. The following lemma implies that in each iteration, one can only increase the value of the solution.

\begin{lemma}\label{lemma 1 -usm}\cite[Lemma 2.1]{buchbinder2015tight}
	For every $1\leq i \leq n$, it holds that $a_i+b_i \geq 0$.
\end{lemma}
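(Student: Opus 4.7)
The plan is to derive the inequality directly from the diminishing-returns form of submodularity (equivalent definition 2 in the preliminaries), applied to the nested pair $X_{i-1} \subseteq Y_{i-1} \setminus \{u_i\}$ with the single element $u_i$.

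First I would establish the invariant that $X_{i-1} \subseteq Y_{i-1}$ and, more specifically, $u_i \in Y_{i-1} \setminus X_{i-1}$ at the start of iteration $i$. This is immediate by induction on $i$: the base case $X_0 = \emptyset \subseteq V = Y_0$ is trivial, and the inductive step holds because iteration $i$ either adds $u_i$ to $X_{i-1}$ while leaving $Y_{i-1}$ unchanged, or removes $u_i$ from $Y_{i-1}$ while leaving $X_{i-1}$ unchanged; in either case $X_i \subseteq Y_i$. Since the algorithm processes vertices in the fixed order $u_1, \ldots, u_n$, the element $u_i$ has not yet been touched before iteration $i$, so $u_i \notin X_{i-1}$ and $u_i \in Y_{i-1}$.

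Next, set $A = X_{i-1}$, $B = Y_{i-1} \setminus \{u_i\}$, and $x = u_i$. By the invariant, $A \subseteq B$ and $x \notin B$, so the second (diminishing-returns) characterization of submodularity gives
\[
f(A \cup \{x\}) - f(A) \;\geq\; f(B \cup \{x\}) - f(B).
\]
The left-hand side is exactly $a_i = f(X_{i-1} \cup \{u_i\}) - f(X_{i-1})$, while $B \cup \{x\} = Y_{i-1}$, so the right-hand side equals $f(Y_{i-1}) - f(Y_{i-1} \setminus \{u_i\}) = -b_i$. Rearranging yields $a_i + b_i \geq 0$, as required.

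The only subtlety I expect is verifying the invariant cleanly — in particular making sure one does not need any non-negativity assumption on $f$, and that the diminishing-returns inequality is the right form of submodularity to invoke (rather than the union/intersection form, which would lead to a less direct argument). Once the invariant $X_{i-1} \subseteq Y_{i-1} \setminus \{u_i\}$ is in place, the conclusion is a one-line application of submodularity and does not require any combinatorial properties specific to directed cuts.
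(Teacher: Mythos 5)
Your proof is correct. Note that the paper itself does not prove this lemma at all --- it simply cites it as Lemma 2.1 of Buchbinder et al.\ --- so there is no in-paper argument to match; your write-up supplies the missing details. Your route via the diminishing-returns characterization, applied to $A = X_{i-1} \subseteq B = Y_{i-1}\setminus\{u_i\}$ with $x = u_i$, is equivalent to the original proof in \cite{buchbinder2015tight}, which instead invokes the union/intersection form of submodularity on the pair $S = X_{i-1}\cup\{u_i\}$, $T = Y_{i-1}\setminus\{u_i\}$, using that $S\cup T = Y_{i-1}$ and $S\cap T = X_{i-1}$; both arguments hinge on exactly the same invariant you prove by induction, namely $X_{i-1} \subseteq Y_{i-1}$ and $u_i \in Y_{i-1}\setminus X_{i-1}$, and neither needs non-negativity of $f$. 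So this is essentially the standard argument, just phrased through the other of the two equivalent definitions listed in the paper's preliminaries.
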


Let $\mathit{OPT}$ denote the set $S\subseteq V$ that maximizes the dicut size. Define $\mathit{OPT}_i \triangleq (\mathit{OPT}\cup X_i)\cap Y_i$. In other words, $\mathit{OPT}_i$ agrees with $X_i$ and $Y_i$ on the first $i$ elements, and agrees with $\mathit{OPT}$ on the rest. Hence, $\mathit{OPT}_0 = \mathit{OPT}$ and $\mathit{OPT}_n = Y_n = X_n$. The following lemma shows that in each iteration, the damage to the optimal solution value, i.e., $f(\mathit{OPT}_{i-1})-f(\mathit{OPT}_i)$, is bounded.

\begin{lemma}\label{lemma 2-usm}\cite[Lemma 2.2]{buchbinder2015tight}
	For every $1\leq i \leq n$, it holds that $f(\mathit{OPT}_{i-1})-f(\mathit{OPT}_i) \leq [f(X_i)-f(X_{i-1})] + [f(Y_i)-f(Y_{i-1})].$
\end{lemma}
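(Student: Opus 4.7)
The plan is a case analysis on the algorithm's decision in iteration $i$, combined with a sub-case split on whether $u_i \in \mathit{OPT}$. First I would record two structural facts that will be used throughout: an easy induction shows $X_j \subseteq Y_j$ for all $j$, and $u_i \notin X_{i-1}$, $u_i \in Y_{i-1}$ (since $u_i$ has not yet been processed). Consequently $X_{i-1} \subseteq \mathit{OPT}_{i-1} \subseteq Y_{i-1}$, and whether $u_i$ lies in $\mathit{OPT}_{i-1}$ is controlled entirely by whether $u_i \in \mathit{OPT}$. Also, from Lemma~\ref{lemma 1 -usm} and the greedy choice, the coordinate actually executed has a nonnegative marginal gain (i.e.\ $\max\{a_i,b_i\} \geq 0$); this handles the degenerate sub-cases in which $\mathit{OPT}_{i-1}=\mathit{OPT}_i$.

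\textbf{Case 1: $a_i \geq b_i$,} so $X_i = X_{i-1}+u_i$, $Y_i = Y_{i-1}$, and the RHS equals $a_i$. If $u_i \in \mathit{OPT}$, then $u_i$ lies in both $\mathit{OPT}_{i-1}$ and $\mathit{OPT}_i$, so $\mathit{OPT}_{i-1}=\mathit{OPT}_i$ and the LHS is $0 \leq a_i$. If $u_i \notin \mathit{OPT}$, then $\mathit{OPT}_i = \mathit{OPT}_{i-1}+u_i$, so the LHS is $f(\mathit{OPT}_{i-1}) - f(\mathit{OPT}_{i-1}+u_i)$. I would bound this by applying submodularity (diminishing returns) with $A = \mathit{OPT}_{i-1}$ and $B = Y_{i-1}-u_i$, using $A \subseteq B$ (which follows from $\mathit{OPT}_{i-1}\subseteq Y_{i-1}$ and $u_i \notin \mathit{OPT}_{i-1}$) and $u_i \notin B$. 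This yields $f(\mathit{OPT}_{i-1}+u_i)-f(\mathit{OPT}_{i-1}) \geq f(Y_{i-1})-f(Y_{i-1}-u_i) = -b_i$, so the LHS is $\leq b_i \leq a_i$.

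\textbf{Case 2: $a_i < b_i$,} symmetric. Here $Y_i = Y_{i-1}-u_i$, $X_i = X_{i-1}$, and the RHS equals $b_i$. If $u_i \notin \mathit{OPT}$, then $\mathit{OPT}_{i-1}=\mathit{OPT}_i$ and the LHS is $0 \leq b_i$. If $u_i \in \mathit{OPT}$, then $\mathit{OPT}_i = \mathit{OPT}_{i-1}-u_i$, and I would apply submodularity with $A = X_{i-1}$, $B = \mathit{OPT}_{i-1}-u_i$ (using $X_{i-1} \subseteq \mathit{OPT}_{i-1}$ together with $u_i \notin X_{i-1}$, giving $A\subseteq B$; and $u_i \notin B$). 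This gives $f(X_{i-1}+u_i) - f(X_{i-1}) \geq f(\mathit{OPT}_{i-1}) - f(\mathit{OPT}_{i-1}-u_i)$, i.e.\ $a_i \geq f(\mathit{OPT}_{i-1}) - f(\mathit{OPT}_i)$, hence the LHS is $\leq a_i \leq b_i$.

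The only step that requires genuine care is verifying, in each sub-case, the chain of subset inclusions that licences the submodularity application (choosing the right $A$ and $B$ so that the inequality produced has the correct ``sign'' relative to what we want to bound). The invariants $X_{i-1} \subseteq \mathit{OPT}_{i-1} \subseteq Y_{i-1}$ together with $u_i \in Y_{i-1}\setminus X_{i-1}$ make all four inclusions routine, so no further ingredients beyond submodularity and Lemma~\ref{lemma 1 -usm} are needed.
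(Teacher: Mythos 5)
Your proof is correct and follows the same case analysis (on whether $a_i \geq b_i$ and whether $u_i \in \mathit{OPT}$, using submodularity with the invariants $X_{i-1} \subseteq \mathit{OPT}_{i-1} \subseteq Y_{i-1}$) as the original argument of Buchbinder et al., which this paper cites for Lemma~\ref{lemma 2-usm} without reproving it. No gaps; the handling of the degenerate sub-cases via $a_i + b_i \geq 0$ is exactly the needed observation.
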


Using Lemmas \ref{lemma 1 -usm},\ref{lemma 2-usm}, one can prove the following theorem.

\begin{theorem}[following from \cite{buchbinder2015tight}]
	The Deterministic Sequential Max-Dicut algorithm gives a 1/3-approximation for the Max-Dicut problem in linear time.
\end{theorem}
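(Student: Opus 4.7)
The plan is a straightforward telescoping argument using Lemma~\ref{lemma 2-usm}, with Lemma~\ref{lemma 1 -usm} providing the structural backbone (and in fact being used in the proof of Lemma~\ref{lemma 2-usm}). First I would sum the inequality of Lemma~\ref{lemma 2-usm} over $i = 1, \ldots, n$. The left-hand side telescopes to $f(\mathit{OPT}_0) - f(\mathit{OPT}_n)$, and both sums on the right telescope as well, yielding
$$f(\mathit{OPT}_0) - f(\mathit{OPT}_n) \leq [f(X_n) - f(X_0)] + [f(Y_n) - f(Y_0)].$$
By the definitions given before Lemma~\ref{lemma 2-usm}, $\mathit{OPT}_0 = \mathit{OPT}$ and $\mathit{OPT}_n = X_n = Y_n$, so after substitution and rearrangement this becomes
$$f(\mathit{OPT}) \leq 2 f(X_n) + f(Y_n) - f(X_0) - f(Y_0) = 3 f(X_n) - f(X_0) - f(Y_0),$$
where the last equality uses $Y_n = X_n$. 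Finally, since $f$ is non-negative (in fact for Max-Dicut $f(X_0) = f(\emptyset) = 0$ and $f(Y_0) = f(V) = 0$), we may drop the last two terms to conclude $f(X_n) \geq \tfrac{1}{3} f(\mathit{OPT})$, which is the claimed approximation ratio.

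For the running time, I would argue that for the Max-Dicut objective each marginal profit
$a_i = f(X_{i-1} + u_i) - f(X_{i-1})$ and $b_i = f(Y_{i-1} - u_i) - f(Y_{i-1})$
depends only on the edges incident to $u_i$, since adding $u_i$ to $X_{i-1}$ (or removing it from $Y_{i-1}$) can only alter the cut-membership status of those edges. Maintaining, for each vertex $v$, the number of already-decided in-neighbors that are in $X$ and the number of already-decided out-neighbors in $Y$ lets the algorithm evaluate $a_i, b_i$ in $O(\deg(u_i))$ time; the totals then sum to $O(m)$, i.e.\ linear.

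I do not expect a genuine obstacle here: the argument is essentially bookkeeping once Lemmas~\ref{lemma 1 -usm} and~\ref{lemma 2-usm} are in hand. The only place that requires attention is checking that the telescoping leaves exactly the right constants — in particular that the coefficient in front of $f(X_n)$ ends up as $3$ rather than $2$ or $4$, which hinges on the fact that $X_n = Y_n$ appears on \emph{both} the left-hand side (through $\mathit{OPT}_n$) and the right-hand side (through the telescoped $Y$-sum). This is also the step where the submodular structure used to prove Lemma~\ref{lemma 2-usm} is implicitly being paid for.
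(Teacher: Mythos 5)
Your telescoping argument is exactly the standard proof the paper is referring to (it cites Buchbinder et al.\ and sketches it via Lemmas~\ref{lemma 1 -usm} and~\ref{lemma 2-usm}), and it is the same calculation the paper carries out explicitly for the distributed variant in Theorem~\ref{Theorem - 1/3 deterministic Max-Dicut}, including the use of $X_n = Y_n$ and $f(X_0)=f(Y_0)=0$. Your constants and the linear-time bookkeeping via $O(\deg(u_i))$ marginal-profit updates are correct, so the proposal matches the paper's approach.
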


\subsubsection{The Distributed Algorithm}
Inspired by the sequential algorithm, we design a distributed algorithm which gives a 1/3-approximation for the Max-Dicut problem.
As in Subsection \ref{Subsection - deterministic 1/2 Max-Cut}, we start by ($\Delta +1$)-coloring the graph. Then, for $1 \leq i \leq (\Delta +1)$ iterations, the vertices make their decisions, one color class at a time. In each iteration, the corresponding vertices calculate the marginal profit gained by their two possible decisions, and take the one which maximizes the profit.
As in the sequential algorithm, the distributed algorithm maintains two solutions $X$ and $Y$; $X_0=\emptyset$ and $Y_0 = V$ as before. $X_i$ and $Y_i$ represent the state of the solutions after $i$ iterations, and $\mathit{OPT}_i$ is defined as $\mathit{OPT}_i=(\mathit{OPT}\cup X_i)\cap Y_i)$.

Define $X_i(v) \triangleq X_i \cap N(v)$, and $Y_i(v) \triangleq Y_i \cap N(v)$. It is easy to see that $X_i$ as defined in the sequential algorithm equals $\cup_{v\in V} X_i(v)$. Similarly, $Y_i = \cup_{v\in V} Y_i(v)$.
Using this notation, Algorithm \ref{alg3} gives the pseudocode of the distributed algorithm.

\begin{algorithm}[H] 
	\caption{Distributed Greedy Max-Dicut}
	\label{alg3}
	\begin{algorithmic}[1]
		\State $X_0(v)=\emptyset, Y_0(v)=N(v)$
		\State color the graph using $(\Delta +1)$ colors
		\For{i=1 to $(\Delta +1)$}
		\If{$C(v)=i$} 
		\State $a_i \leftarrow f(X_{i-1}(v)+v)-f(X_{i-1}(v))$
		\State $b_i \leftarrow f(Y_{i-1}(v)-v)-f(Y_{i-1}(v))$
		\If{$a_i\geq b_i$}
		\State{State($v$) = $S$}
		\State{send 1}
		\Else
		\State{State($v$) = $\bar S$}
		\State{send 0}
		\EndIf
		\EndIf
		\For{each vertex $v$ in $V$ in parallel}
		\State{$X_i(v) \leftarrow X_{i-1}(v) + \{u\mid u\in N(v)$ $\land$ $u$ sent 1$\}$}
		\State{$Y_i(v) \leftarrow Y_{i-1}(v) - \{u\mid u\in N(v)$ $\land$ $u$ sent 0$\}$}
		\EndFor
		\EndFor
		\State output State
	\end{algorithmic}
\end{algorithm}

There are two key ingredients in our analysis. We first prove that the marginal profits $a_i$ and $b_i$ can be computed locally. Then, we need to show that running the procedures in parallel does not affect the approximation ratio.

The next lemma shows that the marginal profits of $v$'s possible decisions depends only on its 1-neighborhood.

\begin{lemma}\label{local marginal profit}
	Let $A\subseteq V$ be a subset of vertices, and let $v$ be a vertex such that $v\notin A$. Then $f(A+v)-f(A)=f((A\cap N(v)) + v)-f(A\cap N(v))$.
\end{lemma}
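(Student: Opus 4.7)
The plan is to prove the equality by directly counting which edges enter or leave the directed cut when $v$ is added to the corresponding set, and then observing that this count depends only on $v$'s immediate neighbors. Since $f(S)$ counts edges directed from $S$ to $\bar S$, and the endpoints of any edge not incident to $v$ retain their set-memberships when $v$ is added, only edges incident to $v$ can change their status. This locality is essentially the entire content of the lemma.

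More concretely, I will introduce $N^+(v)$ and $N^-(v)$ for the out- and in-neighbors of $v$, noting that $N(v)=N^+(v)\cup N^-(v)$. For each out-edge $(v,u)$: since $v\notin A$, it does not contribute to $f(A)$; it contributes to $f(A+v)$ iff $u\notin A$. For each in-edge $(u,v)$: it contributes to $f(A)$ iff $u\in A$, and it never contributes to $f(A+v)$ because $v$ is now on the source side. Summing the contributions and noting that all other edges cancel,
\[
f(A+v)-f(A) = |N^+(v)\setminus A| - |N^-(v)\cap A|.
\]

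Applying the exact same counting to $A' \triangleq A\cap N(v)$, which still excludes $v$ because $v\notin N(v)$, gives
\[
f(A'+v)-f(A') = |N^+(v)\setminus A'| - |N^-(v)\cap A'|.
\]
Since both $N^+(v)$ and $N^-(v)$ are subsets of $N(v)$, replacing $A$ by $A\cap N(v)$ has no effect on either $N^+(v)\setminus A$ or $N^-(v)\cap A$; hence the two right-hand sides coincide, which proves the lemma. I expect no genuine obstacle here; the only mild care needed is to treat the two edge orientations separately in the Max-Dicut setting, since adding $v$ to $A$ turns out-edges incident to $v$ into candidates for the cut while simultaneously removing any in-edges that were previously in it.
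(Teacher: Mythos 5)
Your proof is correct and follows essentially the same route as the paper: both compute the marginal gain of adding $v$ as (out-edges of $v$ to non-members) minus (in-edges of $v$ from members), i.e., the paper's $|E(v,V\setminus(A+v))|-|E(A,v)|$ is exactly your $|N^+(v)\setminus A|-|N^-(v)\cap A|$, and both then observe this quantity is unchanged when $A$ is replaced by $A\cap N(v)$. Your explicit split into $N^+(v)$ and $N^-(v)$ is just a notational variant of the paper's argument.
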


\begin{proof}
	Given $A,B\subseteq V$ such that $A\cap B = \emptyset$, let $|E(A,B)|$ denote the number of edges directed from vertices in $A$ to vertices in $B$.
	We start by proving that for every subset $A\subseteq V$ and $v\notin A$ it holds that:
	\begin{equation*}
	\tag{$\star\star$}
	\begin{split}
	&f(A + v) - f(A) = |E(A+v,V\setminus(A+v))|-|E(A,V\setminus A)|\\
	&=|E(A,V\setminus(A+v))|+|E(v,V\setminus(A+v))|-|E(A,V\setminus(A+v))|-|E(A,v)|\\
	&=|E(v,V\setminus(A+v))|-|E(A,v)|.
	\end{split}		
	\end{equation*}
	
	Note that since $v$ is connected by edges only to its neighbors, $$|E(v,V\setminus(A+v))|-|E(A,v)| = \left|E\left(v,N(v) \cap \left(V\setminus(A+v)\right)\right)\right|-\left|E\left(N(v)\cap A,v\right)\right|.$$
	
	As $\left(A\cap N(v)\right)\subseteq V$, and $v\notin \left(A\cap N(v)\right)$, using $(\star\star)$ we have that 
	\begin{equation*}
	\begin{split}
	&f((A\cap N(v)) + v)-f(A\cap N(v))\\ &= |E(v,V\setminus(A\cap N(v)+v))|-|E(A\cap N(v),v)|\\
	 &=\left|E\left(v,N(v) \cap \left(V\setminus(A+v)\right)\right)\right|-\left|E\left(\left(N(v)\cap A\right),v\right)\right|,
	\end{split}
	\end{equation*}
	which proves the lemma.\qed
	%
	%
	%
\end{proof}

We now prove that making the decision to join $S$ or $\bar S$ in parallel for independent sets does not affect the approximation ratio.

\begin{lemma}\label{linearity of marginal profits}
	For every $1\leq i \leq (\Delta +1)$, it holds that $f(\mathit{OPT}_{i-1})-f(\mathit{OPT}_i) \leq [f(X_i)-f(X_{i-1})] + [f(Y_i)-f(Y_{i-1})]$.
\end{lemma}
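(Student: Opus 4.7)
The plan is to reduce the parallel step of color class $i$ to a telescoping sequence of single-vertex steps, and then invoke Lemma~\ref{lemma 2-usm} from \cite{buchbinder2015tight} on each. The key structural fact that makes this reduction work is that the vertices of color $i$ form an independent set, so the distributed decision made by any vertex in the class can be replayed in any serial order without change.

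Fix an iteration $i$ and let $V_i = \{v_1,\ldots,v_k\}$ be the vertices of color $i$, in an arbitrary order. Define hybrid states $X_i^{(0)} = X_{i-1}$, $Y_i^{(0)} = Y_{i-1}$, and for $j=1,\ldots,k$ update: if the distributed algorithm put $v_j \in S$, set $X_i^{(j)} = X_i^{(j-1)} + v_j$ and $Y_i^{(j)} = Y_i^{(j-1)}$; otherwise set $X_i^{(j)} = X_i^{(j-1)}$ and $Y_i^{(j)} = Y_i^{(j-1)} - v_j$. Then $X_i^{(k)} = X_i$ and $Y_i^{(k)} = Y_i$. Correspondingly let $\mathit{OPT}_i^{(j)} = (\mathit{OPT}\cup X_i^{(j)}) \cap Y_i^{(j)}$, so $\mathit{OPT}_i^{(0)} = \mathit{OPT}_{i-1}$ and $\mathit{OPT}_i^{(k)} = \mathit{OPT}_i$.

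The core step I would carry out is to show that for every $1 \le j \le k$,
\[
f(\mathit{OPT}_i^{(j-1)}) - f(\mathit{OPT}_i^{(j)}) \;\le\; [f(X_i^{(j)}) - f(X_i^{(j-1)})] + [f(Y_i^{(j)}) - f(Y_i^{(j-1)})].
\]
This is exactly the conclusion of Lemma~\ref{lemma 2-usm} applied to the single-vertex step $v_j$ on the pair $(X_i^{(j-1)}, Y_i^{(j-1)})$. To legitimately invoke that lemma, I must verify that the decision the algorithm makes for $v_j$ is the one that Lemma~\ref{lemma 2-usm} assumes, namely the one maximizing marginal profit at state $(X_i^{(j-1)}, Y_i^{(j-1)})$. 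Here the independence of $V_i$ is crucial: since no $v_\ell$ with $\ell < j$ is a neighbor of $v_j$, we have $X_i^{(j-1)} \cap N(v_j) = X_{i-1} \cap N(v_j) = X_{i-1}(v_j)$, and similarly $Y_i^{(j-1)} \cap N(v_j) = Y_{i-1}(v_j)$. By Lemma~\ref{local marginal profit}, the marginal profits
\[
f(X_i^{(j-1)} + v_j) - f(X_i^{(j-1)}), \qquad f(Y_i^{(j-1)} - v_j) - f(Y_i^{(j-1)})
\]
coincide with the quantities $a_i, b_i$ that $v_j$ actually computes locally in Algorithm~\ref{alg3}, so the distributed choice of $v_j$ is exactly the greedy serial choice at state $(X_i^{(j-1)}, Y_i^{(j-1)})$. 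This is the main point and essentially the only nontrivial part of the argument.

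Finally, I would sum the per-step inequality over $j = 1,\ldots,k$. The left-hand side telescopes to $f(\mathit{OPT}_{i-1}) - f(\mathit{OPT}_i)$ and the right-hand side telescopes to $[f(X_i) - f(X_{i-1})] + [f(Y_i) - f(Y_{i-1})]$, which proves the lemma. The main obstacle is the bookkeeping in the previous paragraph: one must be precise that the distributed $a_i, b_i$ (which are computed from $X_{i-1}(v_j), Y_{i-1}(v_j)$) coincide both with the true marginal profits at state $(X_{i-1}, Y_{i-1})$ (via Lemma~\ref{local marginal profit}) \emph{and} with the marginal profits at the hybrid state $(X_i^{(j-1)}, Y_i^{(j-1)})$ (via independence of the color class). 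Once that identification is clean, Lemma~\ref{lemma 2-usm} is a black box and telescoping finishes the proof.
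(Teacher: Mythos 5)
Your proposal is correct and follows essentially the same route as the paper's proof: simulate the parallel color-class step as a sequence of single-vertex steps, use independence of the color class together with Lemma~\ref{local marginal profit} to identify the locally computed $a_i,b_i$ with the serial greedy marginal profits at each hybrid state, then apply Lemma~\ref{lemma 2-usm} per step and telescope. Your write-up is in fact slightly more explicit than the paper's about why the hybrid-state marginal profits coincide with the locally computed ones, but the argument is the same.
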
	

\begin{proof}
	Let $I=\{v_1,v_2,...,v_m\}$ be an independent set of color $i$. We show that iteration $i$ of the distributed algorithm is equivalent to $m$ iterations of the sequential one.
	
	We can simulate the $i$th iteration of the distributed algorithm as $m$ sequential iterations, where in the $j$th iteration, vertex $v_j$ makes the exact same decision it makes in the distributed algorithm. Let $X_{i-1}^j,Y_{i-1}^j,\mathit{OPT}_{i-1}^j$ represent the state of $X_{i-1},Y_{i-1}$ and $\mathit{OPT}_{i-1}$ after the $j$th iteration of the simulation. Using the above notation, we prove the lemma by showing that: 
	$$\sum_{j=1}^{m}[f(\mathit{OPT}_{i-1}^{j-1})-f(\mathit{OPT}_{i-1}^j)] \leq \sum_{j=1}^{m}[f(X_{i-1}^j)-f(X_{i-1}^{j-1})] + \sum_{j=1}^{m}[f(Y_{i-1}^j)-f(Y_{i-1}^{j-1})].$$
	
	For this, it suffices to show that $[f(\mathit{OPT}_{i-1}^{j-1})-f(\mathit{OPT}_{i-1}^j)] \leq[f(X_{i-1}^j)-f(X_{i-1}^{j-1})] + [f(Y_{i-1}^j)-f(Y_{i-1}^{j-1})]$, for all $1\leq j\leq m$.
	
	Since $I$ is an independent set, it holds that $X_{i-1}(v)\cap I = \emptyset$ and $Y_{i-1}(v)\cap I = \emptyset$ for every $v\in I$, i.e. the decision of every vertex $v\in I$ does not depend on the decisions made by the other vertices in $I$. By Lemma \ref{local marginal profit}, $f(X_{i-1} + v)-f(X_{i-1})=f(X_{i-1}(v) + v)-f(X_{i-1}(v))$, and $f(Y_{i-1}-v)-f(Y_{i-1})=f(Y_{i-1}(v)-v)-f(Y_{i-1}(v))$ for every vertex $v\in V$. Therefore, given $X_{i-1}^{j-1}, Y_{i-1}^{j-1}$ and $v_j$, an iteration of the sequential algorithm is equivalent to the $j$th iteration of the simulation. We now complete the proof using Lemma \ref{lemma 2-usm}.\qed
\end{proof}

\begin{theorem}\label{Theorem - 1/3 deterministic Max-Dicut}
	The algorithm Distributed Greedy Max-Dicut gives a 1/3-approximation for the Max-Dicut problem in $\tilde{O}(\Delta + \log^*n)$ communication rounds in the $\mathcal{CONGEST}$ model.
\end{theorem}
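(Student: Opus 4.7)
The plan is to reduce the analysis of the distributed algorithm to the sequential one of Buchbinder et al., and then account for the round complexity and message size separately.

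First, I would argue approximation ratio via a telescoping sum. By Lemma \ref{local marginal profit}, the marginal profits $a_i$ and $b_i$ that a vertex $v$ computes in the distributed algorithm from $X_{i-1}(v)$ and $Y_{i-1}(v)$ are exactly the marginal profits that would arise in a sequential simulation. In particular, the sequential inequality $a_i + b_i \geq 0$ of Lemma \ref{lemma 1 -usm} still holds for each vertex's local decision, which in turn guarantees that replacing $X_{i-1}$ by $X_i$ and $Y_{i-1}$ by $Y_i$ only increases $f(X)+f(Y)$. This implies $f(X_{\Delta+1}) + f(Y_{\Delta+1}) \geq f(X_0) + f(Y_0) = f(\emptyset) + f(V) \geq 0$. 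Combining this with Lemma \ref{linearity of marginal profits}, summed over $i = 1, \ldots, \Delta+1$, we obtain
\[
f(\mathit{OPT}_0) - f(\mathit{OPT}_{\Delta+1}) \leq [f(X_{\Delta+1}) - f(X_0)] + [f(Y_{\Delta+1}) - f(Y_0)].
\]
Since $\mathit{OPT}_0 = \mathit{OPT}$ and $\mathit{OPT}_{\Delta+1} = X_{\Delta+1} = Y_{\Delta+1}$ (because every vertex has been processed), this rearranges, exactly as in the sequential proof, to $2 f(X_{\Delta+1}) + f(Y_{\Delta+1}) \geq f(\mathit{OPT})$, and since $f(Y_{\Delta+1}) = f(X_{\Delta+1})$ at the end, we conclude $f(X_{\Delta+1}) \geq \frac{1}{3} f(\mathit{OPT})$.

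Next, I would bound the number of rounds. The $(\Delta+1)$-coloring of Barenboim~\cite{barenboim2015deterministic} takes $\tilde{O}(\Delta^{3/4} + \log^* n)$ rounds in the $\mathcal{CONGEST}$ model. The main loop of Distributed Greedy Max-Dicut runs for $\Delta + 1$ iterations; in each, every active vertex performs a purely local computation (Lemma \ref{local marginal profit} guarantees that $a_i, b_i$ depend only on $N(v)$ and on decisions already made by its neighbors), sends a single bit to its neighbors, and updates $X_i(v), Y_i(v)$ based on the bits received. Each iteration therefore uses $O(1)$ rounds and single-bit messages, which trivially fit within $\mathcal{CONGEST}$'s $O(\log n)$ message bound. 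Summing, the total round complexity is $\tilde{O}(\Delta^{3/4} + \log^* n) + O(\Delta + 1) = \tilde{O}(\Delta + \log^* n)$.

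The main conceptual obstacle has already been handled by Lemmas \ref{local marginal profit} and \ref{linearity of marginal profits}: these let us argue that processing an entire color class in parallel is indistinguishable, in the sequence of values $f(X_i), f(Y_i), f(\mathit{OPT}_i)$ it produces, from processing the vertices of that class one by one in an arbitrary order. Given that equivalence, the approximation analysis reduces to the sequential telescoping argument of Buchbinder et al., and the only remaining task is the bookkeeping on coloring complexity and verifying $\mathcal{CONGEST}$-friendliness of the messages, both of which are routine.
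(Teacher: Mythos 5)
Your proposal is correct and follows essentially the same route as the paper's own proof: sum Lemma~\ref{linearity of marginal profits} over $i=1,\dots,\Delta+1$, telescope, use $\mathit{OPT}_{\Delta+1}=X_{\Delta+1}=Y_{\Delta+1}$ together with the trivial boundary values of $f(X_0),f(Y_0)$, and then add the coloring cost $\tilde{O}(\Delta^{3/4}+\log^* n)$ to the $\Delta+1$ constant-round iterations. The only cosmetic deviations are that you invoke non-negativity $f(X_0)+f(Y_0)\geq 0$ where the paper uses $f(X_0)=f(Y_0)=0$, and your appeal to Lemma~\ref{lemma 1 -usm} is correct but not actually needed for the final bound.
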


\begin{proof} We start by showing the approximation ratio. By Lemma \ref{linearity of marginal profits}, $f(\mathit{OPT}_{i-1})-f(\mathit{OPT}_i) \leq [f(X_i)-f(X_{i-1})] + [f(Y_i)-f(Y_{i-1})]$ for all $1\leq i \leq (\Delta +1)$. Summing up the inequality for every $i$ gives:
	$$\sum_{i=1}^{\Delta + 1}[f(\mathit{OPT}_{i-1})-f(\mathit{OPT}_i)] \leq \sum_{i=1}^{\Delta + 1}[f(X_i)-f(X_{i-1})] + \sum_{i=1}^{\Delta + 1}[f(Y_i)-f(Y_{i-1})].$$
	
	As we saw in the sequential case, this is a telescopic sum that after cancellation gives:
	$f(\mathit{OPT}_0)-f(\mathit{OPT}_{\Delta + 1}) \leq [f(X_{\Delta +1})-f(X_{0})] + [f(Y_{\Delta +1})-f(Y_{0})]=f(X_{\Delta +1})+f(Y_{\Delta +1}).$
	
	The last equality follows from the fact that in the case of Max-Dicut $f(X_0)=f(Y_0)=0$.	
	Hence the output $f(\mathit{OPT}_{\Delta + 1}) \geq f(\mathit{OPT})/3$.
	
	We now analyze the number of communication rounds needed. Coloring the graph using the algorithm of \cite{barenboim2015deterministic} takes $\tilde{O}(\Delta^{3/4}+\log^* n)$ communication rounds. After the coloring, the algorithm runs for ($\Delta + 1$) iterations, each one takes O(1) communication rounds, and hence, the time complexity follows.\qed
\end{proof}

\subsection{A Randomized 1/2-Approximation Algorithm for Max-Dicut}
As shown in \cite{buchbinder2015tight}, using random decisions improves the approximation ratio. The randomized algorithm differs from the deterministic algorithm in the decision making process. Rather than taking the most profitable decision, the algorithm takes each of the possible decisions with a probability proportional to its value. 
A formal description of the algorithm is given in Algorithm \ref{Alg - rundomized 1/2 Max-Dicut}. The variables $X_i,Y_i$ and $\mathit{OPT}_i$ are defined as in the deterministic distributed algorithm.
\negA
\begin{algorithm}[H]
	\caption{Distributed Randomized Max-Dicut}
	\label{Alg - rundomized 1/2 Max-Dicut}
	\begin{algorithmic}[1]
		\State $X_0(v)=\emptyset, Y_0(v)=N(v)$
		\State color the graph using $(\Delta +1)$ colors
		\For{i=1 to $(\Delta +1)$}
		\If{$C(v)=i$} 
		\State $a_i \leftarrow f(X_{i-1}(v)+v)-f(X_{i-1}(v))$
		\State $b_i \leftarrow f(Y_{i-1}(v)-v)-f(Y_{i-1}(v))$
		\State{$a_i'\leftarrow max\{a_i,0\},b_i'\leftarrow max\{b_i,0\}$}
		\State{\textbf{with probability} $a_i'/(a_i'+b_i')$ \textbf{do}}
		\State{\indent State($v$) = $S$}
		\State{\indent send 1}
		\State{\textbf{else} (with probability $b_i'/(a_i'+b_i')$) \textbf{do}}
		\State{\indent State($v$) = $\bar S$}
		\State{\indent send 0}
		\EndIf
		\For{each vertex $v$ in $V$ in parallel}
		\State{$X_i(v) \leftarrow X_{i-1}(v) + \{u|u\in N(v)$ $\land$ $u$ sent 1$\}$}
		\State{$Y_i(v) \leftarrow Y_{i-1}(v) - \{u|u\in N(v)$ $\land$ $u$ sent 0$\}$}
		\EndFor
		\EndFor
		\State output State
	\end{algorithmic}
\end{algorithm}
\negA

We first show the equivalent of Lemma \ref{linearity of marginal profits}, and then prove our main theorem for our algorithm.

\begin{lemma}\label{expactation}
	For every $1\leq i \leq (\Delta +1)$, it holds that: 
	\begin{equation}
	\tag{$\star$}\mathbb{E}[f(\mathit{OPT}_{i-1})-f(\mathit{OPT}_i)] \leq \frac{1}{2} \mathbb{E}[f(X_i)-f(X_{i-1}) + f(Y_i)-f(Y_{i-1})].
	\end{equation}
\end{lemma}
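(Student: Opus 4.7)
The plan is to adapt the proof of Lemma \ref{linearity of marginal profits} to the randomized setting, replacing the appeal to the deterministic sequential bound by the randomized sequential bound of Buchbinder et al. Fix iteration $i$ and let $I = \{v_1, \ldots, v_m\}$ be the independent set of vertices with color $i$. The first step is to show that iteration $i$ of the distributed randomized algorithm is equivalent in distribution to $m$ sequential sub-iterations, where in the $j$th sub-iteration only $v_j$ makes its (randomized) decision. Because $I$ is an independent set, no vertex in $I$ lies in $N(v_j)$; by Lemma \ref{local marginal profit}, the marginal profits $a_i, b_i$ computed by $v_j$ depend only on $X_{i-1}(v_j)$ and $Y_{i-1}(v_j)$, which are unaffected by decisions of other vertices in $I$ during the same iteration. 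Hence the conditional distribution of $v_j$'s decision in the simulation (given the decisions of $v_1,\dots,v_{j-1}$) coincides with its distribution in the parallel round, and the joint law of $(X_i, Y_i)$ is preserved.

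Let $X_{i-1}^j, Y_{i-1}^j, \mathit{OPT}_{i-1}^j$ denote the state after $j$ sub-iterations of the simulation, so that $X_{i-1}^0 = X_{i-1}$, $X_{i-1}^m = X_i$, and likewise for $Y$ and $\mathit{OPT}$. The second step is to apply, to each sub-iteration, the randomized sequential counterpart of Lemma \ref{lemma 2-usm} from Buchbinder et al., which states that for a single vertex decision taken with probabilities proportional to $a'$ and $b'$,
\begin{equation*}
\mathbb{E}\bigl[f(\mathit{OPT}_{i-1}^{j-1}) - f(\mathit{OPT}_{i-1}^{j}) \,\big|\, \mathcal{F}_{j-1}\bigr]
\;\leq\; \tfrac{1}{2}\,\mathbb{E}\bigl[f(X_{i-1}^{j}) - f(X_{i-1}^{j-1}) + f(Y_{i-1}^{j}) - f(Y_{i-1}^{j-1}) \,\big|\, \mathcal{F}_{j-1}\bigr],
\end{equation*}
where $\mathcal{F}_{j-1}$ is the sigma-algebra generated by all decisions made up through sub-iteration $j-1$. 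By the equivalence established in the first step, the preconditions of this lemma apply to each $v_j$ with the same marginal profits as in the distributed round.

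Taking total expectations via the tower property, then summing over $j = 1, \ldots, m$, the right-hand side telescopes to $\tfrac{1}{2}\,\mathbb{E}[f(X_i) - f(X_{i-1}) + f(Y_i) - f(Y_{i-1})]$, and the left-hand side telescopes to $\mathbb{E}[f(\mathit{OPT}_{i-1}) - f(\mathit{OPT}_i)]$, proving $(\star)$.

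The main obstacle is the first step: one must justify carefully that the joint distribution of outcomes in iteration $i$ of the parallel algorithm matches that of the $m$-step sequential simulation. Lemma \ref{local marginal profit} gives the locality of each individual marginal profit, but to combine this with the independence of $I$ one must observe that, throughout iteration $i$, the neighborhoods $X_{i-1}(v_j)$ and $Y_{i-1}(v_j)$ seen by $v_j$ are frozen regardless of the order in which the vertices of $I$ act, because none of its neighbors lies in $I$. Once this coupling is in place, the rest of the argument is a routine reduction to the randomized sequential analysis of Buchbinder et al.
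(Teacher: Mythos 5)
Your proposal is correct and follows essentially the same route as the paper: simulate the parallel iteration over the independent color class by a sequence of single-vertex sub-iterations (valid because, by Lemma~\ref{local marginal profit} and independence, each vertex's marginal profits are unaffected by the other vertices of the same color), apply the randomized per-step bound of Buchbinder et al.\ to each sub-iteration, and sum/telescope in expectation. Your write-up is in fact more careful than the paper's, which states the distributional equivalence and the summation step without spelling out the conditioning and tower-property details.
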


\begin{proof}
	As shown in \cite{buchbinder2015tight}, for the sequential randomized algorithm, where in each step only one vertex makes a decision, it holds that $\mathbb{E}[f(\mathit{OPT}_{i-1})-f(\mathit{OPT}_i)] \leq \frac{1}{2} \mathbb{E}[f(X_i)-f(X_{i-1}) + f(Y_i)-f(Y_{i-1})]$. Also, as shown in the proof for Lemma \ref{linearity of marginal profits}, denoting by $I_i$ the independent set of vertices colored with $i$, the $i$-th iteration of the distributed algorithm can be simulated by $|I_i|$ iterations of the sequential algorithm. Since the inequality holds for one iteration of the sequential algorithms, it holds for $|I_i|$ iterations, and therefore holds for the distributed algorithm.\qed
\end{proof}

\begin{theorem}
	The algorithm Distributed Randomized Max-Dicut outputs a 1/2-approximation for Max-Dicut in $\tilde{O}(\Delta + \log^*n)$ communication rounds.
\end{theorem}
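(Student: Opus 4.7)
The plan is to mirror the proof of Theorem~\ref{Theorem - 1/3 deterministic Max-Dicut}, but feed in the stronger per-round bound given by Lemma~\ref{expactation} (the $\tfrac{1}{2}$ factor in place of $1$). First I would sum the inequality in Lemma~\ref{expactation} over $i = 1, \dots, \Delta+1$, obtaining
\[
\sum_{i=1}^{\Delta+1}\mathbb{E}[f(\mathit{OPT}_{i-1}) - f(\mathit{OPT}_i)] \;\le\; \tfrac{1}{2}\sum_{i=1}^{\Delta+1}\mathbb{E}[f(X_i) - f(X_{i-1}) + f(Y_i) - f(Y_{i-1})].
\]
Both sides telescope. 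On the left the sum collapses to $\mathbb{E}[f(\mathit{OPT}_0) - f(\mathit{OPT}_{\Delta+1})]$, and on the right to $\tfrac{1}{2}\mathbb{E}[f(X_{\Delta+1}) - f(X_0) + f(Y_{\Delta+1}) - f(Y_0)]$.

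Next I would plug in the boundary values. By the definition of $\mathit{OPT}_i$ we have $\mathit{OPT}_0 = \mathit{OPT}$, so $f(\mathit{OPT}_0) = f(\mathit{OPT})$. Since $X_0 = \emptyset$ and $Y_0 = V$, for the Max-Dicut function it holds that $f(X_0) = f(Y_0) = 0$. At the end of the algorithm $X_{\Delta+1} = Y_{\Delta+1}$, and this common set is exactly the algorithm's output $\mathit{ALG}$; it also coincides with $\mathit{OPT}_{\Delta+1}$. Substituting yields
\[
f(\mathit{OPT}) - \mathbb{E}[f(\mathit{ALG})] \;\le\; \tfrac{1}{2}\bigl(\mathbb{E}[f(\mathit{ALG})] + \mathbb{E}[f(\mathit{ALG})]\bigr) \;=\; \mathbb{E}[f(\mathit{ALG})],
\]
and rearranging gives $\mathbb{E}[f(\mathit{ALG})] \ge \tfrac{1}{2} f(\mathit{OPT})$, which is the claimed approximation ratio.

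For the round complexity, I would reuse the argument of Theorem~\ref{Theorem - 1/3 deterministic Max-Dicut} verbatim: the $(\Delta+1)$-coloring step of~\cite{barenboim2015deterministic} runs in $\tilde{O}(\Delta^{3/4} + \log^* n)$ rounds, and afterwards each of the $\Delta+1$ color classes executes one round of local communication of single bits (so the algorithm fits the $\mathcal{CONGEST}$ model). Summing gives $\tilde{O}(\Delta + \log^* n)$.

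I do not anticipate a real obstacle here, since Lemma~\ref{expactation} already encapsulates the non-trivial step (adapting the sequential analysis of~\cite{buchbinder2015tight} to parallel moves over an independent color class via the local-profit identity of Lemma~\ref{local marginal profit}). The only things to be careful about are: (i) confirming $\mathit{OPT}_{\Delta+1} = X_{\Delta+1} = Y_{\Delta+1}$, which follows because after every vertex has been processed, $X$ and $Y$ have merged and $\mathit{OPT}_n = (\mathit{OPT} \cup X_n) \cap Y_n = X_n$; and (ii) using $f(\emptyset) = f(V) = 0$, which is specific to Max-Dicut (all edges go from $S$ to $\bar S$, and both endpoints lie on one side in these extreme cases). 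These are the same two ingredients used in the $1/3$ proof, so no new difficulty arises.
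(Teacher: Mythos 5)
Your proposal is correct and follows essentially the same route as the paper's own proof: sum the inequality of Lemma~\ref{expactation} over all $\Delta+1$ color classes, telescope, use $f(X_0)=f(Y_0)=0$, $\mathit{OPT}_0=\mathit{OPT}$, and $X_{\Delta+1}=Y_{\Delta+1}=\mathit{OPT}_{\Delta+1}$ to conclude $\mathbb{E}[f(\mathit{OPT}_{\Delta+1})]\geq f(\mathit{OPT})/2$, with the round complexity argued exactly as in Theorem~\ref{Theorem - 1/3 deterministic Max-Dicut}. No gaps; the boundary-value checks you flag are the same ones the paper relies on.
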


\begin{proof}
	The proof is very similar to the proof of Theorem \ref{Theorem - 1/3 deterministic Max-Dicut}. Using $(\star)$, and taking a summation over all $1 \leq i \leq {\Delta +1}$, we have
	$$\sum_{i=1}^{\Delta + 1}\mathbb{E}[f(\mathit{OPT}_{i-1})-f(\mathit{OPT}_i)] \leq \frac{1}{2} \sum_{i=1}^{\Delta + 1}\mathbb{E}[f(X_i)-f(X_{i-1}) + f(Y_i)-f(Y_{i-1})].$$
	Noting that the sum is telescopic, most of the terms cancel out, and we have
	\begin{equation*}
	\begin{split}
	\mathbb{E}[f(\mathit{OPT}_{0})-f(\mathit{OPT}_{\Delta+1})] &\leq \frac{1}{2} \mathbb{E}[f(X_{\Delta+1})-f(X_{0}) + f(Y_{\Delta+1})-f(Y_{0})] \\ &\leq\frac{1}{2} \mathbb{E}[f(X_{\Delta+1})+f(Y_{\Delta+1})].
	\end{split}
	\end{equation*}
	Therefore, since $\mathit{OPT}_0=\mathit{OPT}$, we have that the output satisfies $f(X_{\Delta+1})=f(Y_{\Delta+1})=f(\mathit{OPT}_{\Delta+1}) \geq f(\mathit{OPT})/2$.
	The time complexity analysis is identical to the one for the deterministic algorithm.\qed
\end{proof}

\negA
\mysection{A Deterministic $\mathcal{LOCAL}$ Algorithm}
Our coloring-based algorithms  
may become inefficient for high degree graphs, due to the strong dependence on $\Delta$. 
Consider a clique in this model. The above algorithms require a linear number of communication rounds, while learning the entire graph and finding an optimal solution requires only $O(1)$ communication rounds in the $\mathcal{LOCAL}$ model. Indeed, there is a tradeoff between the graph diameter and the average degree of its vertices. Based on this tradeoff, we propose a faster, two-step, deterministic algorithm for Max-Cut that requires $min\{\tilde{O}(\Delta +\log^*n),O(\sqrt{n})\}$ communication rounds in the 
$\mathcal{LOCAL}$ model. The pseudocode is given in Algorithm \ref{alg - local fast}.

We call a vertex $v$ a \emph{low-degree} vertex, if $deg(v)<\sqrt{n}$, and a \emph{high-degree} vertex, if $deg(v)\geq \sqrt{n}$. Define $G_{low}$, and $G_{high}$ as the graphs induced by the low-degree vertices and the high-degree vertices, respectively. The idea is to solve the problem separately for $G_{low}$ and for $G_{high}$.

In the first step, the algorithm deletes every high-degree vertex, if there are any, and its adjacent edges, creating $G_{low}$. The deletion means that the low-degree vertices ignore the edges that connect them to high-degree vertices, and do not communicate over them. Then, the algorithm approximates the Max-Cut on $G_{low}$, using one of the coloring-based algorithms described in Section \ref{Section - coloring based algorithms}.

In the second step, the problem is solved optimally within each connected component in $G_{high}$. However, the high-degree vertices are allowed to communicate over edges which are not in $G_{high}$.
As we prove next, the distance in the original graph $G$ between any two vertices which are connected in $G_{high}$ is bounded from above by $O(\sqrt{n})$. Hence, the number of rounds needed for this part of the algorithm is bounded as well by $O(\sqrt{n})$. 
\begin{algorithm}
	\caption{Fast Distributed Greedy Max-Cut}         
	\label{alg - local fast}       
	\begin{algorithmic}[1]
		\State run Distributed Greedy Max-Cut on $G_{low}$
		\For {each connected component in $G_{high}$}
		\State learn the component topology in $G$, including all its adjacent edges
		\State let the vertex with the lowest id find an optimal solution, and determine the output for each vertex in its component
		\EndFor
		\State output the vertices decisions
	\end{algorithmic}
\end{algorithm}
\negA
\negA
\begin{lemma} \label{limited distance}
	Assume $u,v$ are connected in $G_{high}$, then the distance between $u$ and $v$ in the original graph $G$ is at most $3\sqrt{n}$.
\end{lemma}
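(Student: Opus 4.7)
The plan is to bound $d_G(u,v)$ via a BFS-layer counting argument. First I would fix any path $u = w_0, w_1, \ldots, w_k = v$ in $G_{high}$, so in particular every $w_i$ is high-degree (has $\deg_G(w_i) \geq \sqrt{n}$) and every consecutive pair is adjacent in $G$. Then I would perform BFS in $G$ from $u$ and let $L_j$ denote the set of vertices at distance exactly $j$ from $u$; write $\ell = d_G(u,v)$.

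The key observation is that, since $w_i$ and $w_{i+1}$ are $G$-adjacent, their BFS levels differ by at most $1$. Starting from $\mathrm{level}(w_0) = 0$ and ending at $\mathrm{level}(w_k) = \ell$, a discrete intermediate-value argument shows that every integer $j \in \{0, 1, \ldots, \ell\}$ is attained as $\mathrm{level}(w_i)$ for some $i$. Because this particular $w_i$ is high-degree and all of its $G$-neighbors live in $L_{j-1} \cup L_j \cup L_{j+1}$, we obtain
\begin{equation*}
|L_{j-1}| + |L_j| + |L_{j+1}| \;\geq\; \sqrt{n} \qquad \text{for every } j \in \{0, 1, \ldots, \ell\}.
\end{equation*}

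I would then sum this inequality over $j = 0, 1, \ldots, \ell$. On the right-hand side this gives $(\ell + 1)\sqrt{n}$. On the left-hand side each $|L_k|$ is counted at most three times, and $\sum_k |L_k| \leq n$, yielding the bound $3n$. Combining the two estimates gives $(\ell + 1)\sqrt{n} \leq 3n$, i.e.\ $\ell \leq 3\sqrt{n} - 1 < 3\sqrt{n}$, which is the desired conclusion. The only subtle step is the intermediate-value argument on the levels along the $G_{high}$-path; once that is established, the double-counting of BFS layers is routine. No step appears to be a real obstacle.
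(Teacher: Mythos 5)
Your proof is correct and rests on essentially the same counting idea as the paper's: both exploit that BFS levels from $u$ change by at most one along the $G_{high}$-path, so a high-degree path vertex appears at every level and its $\geq \sqrt{n}$ neighbors are confined to three consecutive layers, forcing more than $n$ vertices if the distance exceeded $3\sqrt{n}$. The paper phrases this as a contradiction that packs disjoint neighborhoods of every third such vertex, whereas you sum the layer inequality directly — a purely presentational difference.
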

\begin{proof} 
	Let $dist_G(v_1,v_2)$ denote the distance between the vertices $v_1,v_2\in V$ in the original graph $G$. Let $u,v$ be two connected vertices in $G_{high}$, and assume, toward a contradiction, that $dist_G(u,v)>3\sqrt{n}$.
	Let $\{A_i\}_{i=0}^{m}=(u=a_0,a_1,...,a_m=v)$ be a sequence of vertices that lie on a shortest path from $u$ to $v$ in $G_{high}$. For each pair of vertices $(a_i,a_{i+1}),i=0,..,m-1$ on the path, it holds that $|dist_G(a_i,u)-dist_G(a_{i+1},u)|\leq 1$. Therefore, there is a subsequence $\{A_{i_j}\}_{j=0}^{k}$ for $k>3\sqrt{n}$, which starts with $u$ and ends with $v$, such that for every $j=0,..,k-1$ it holds that $dist_G(a_{i_{j+1}},u)-dist_G(a_{i_j},u)=1$.
	
	Note that if $a_{i_{j_1}}$ and $a_{i_{j_2}}$ have a common neighbor, then $|dist_G(a_{i_{j_1}},u)-dist_G(a_{i_{j_2}},u)|<3$.
	Since there are at least $\frac{k}{3}$ vertices in $\{A_{i_j}\}_{j=0}^{k}$, such that the distance between them is at least $3$, and each of them is of degree at least $\sqrt{n}$, we have that the number of vertices in $G$ is at least $\frac{k}{3}\cdot\sqrt{n}>n$. This contradicts the assumption that $dist_G(u,v)>3\sqrt{n}$.\qed
\end{proof}
\begin{theorem}
	Fast Distributed Greedy Max-Cut yields a $\frac{1}{2}$-approximation to Max-Cut, using $min\{\tilde{O}(\Delta +\log^*n),O(\sqrt{n})\}$ communication rounds in the $\mathcal{LOCAL}$ model.
\end{theorem}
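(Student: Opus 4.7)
The plan is to bound the approximation ratio and the round count separately. Partition $E$ as $E_{low} \cup E_{high} \cup E_{cross}$, where $E_{low}$ and $E_{high}$ are the edges internal to $G_{low}$ and $G_{high}$ and $E_{cross}$ are the cross edges between a low-degree and a high-degree vertex. Using the trivial upper bound $OPT \leq |E|$, it suffices for the approximation ratio to show that the algorithm cuts at least $|E|/2$ edges.

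For Step~1, Distributed Greedy Max-Cut is invoked on $G_{low}$ and by Theorem~\ref{Theorem - 1/2 deterministic Max-Cut } cuts at least $|E_{low}|/2$ of the edges of $E_{low}$. For Step~2, I exploit the fact that it runs \emph{after} Step~1, so for each connected component $C$ of $G_{high}$ the lowest-id vertex learns the topology of $C$, the set $\partial C$ of incident cross edges, and the already-committed low-degree states at the other endpoints of $\partial C$. Its optimal solution is therefore an assignment of the vertices of $C$ that maximizes the number of cut edges inside $E(C) \cup \partial C$ given those boundary states. Comparing with a uniform random $\{S,\bar S\}$-assignment of $C$, whose expected cut contribution on $E(C) \cup \partial C$ equals $|E(C)|/2 + |\partial C|/2$ (cross edges are cut with probability $1/2$ regardless of the fixed boundary value), the optimum must be at least $|E(C)|/2 + |\partial C|/2$. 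Summing over components of $G_{high}$ and using $\sum_C |\partial C| = |E_{cross}|$ yields a further $(|E_{high}| + |E_{cross}|)/2$ cut edges. Adding the two contributions gives a total cut of at least $|E|/2 \geq OPT/2$.

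For the round complexity, the $\tilde O(\Delta + \log^* n)$ term is immediate: if $\Delta$ is small one may simply run Distributed Greedy Max-Cut on $G$ itself and invoke Theorem~\ref{Theorem - 1/2 deterministic Max-Cut } to obtain a $1/2$-approximation in $\tilde O(\Delta + \log^* n)$ rounds in $\mathcal{CONGEST}$, hence in $\mathcal{LOCAL}$. For the $O(\sqrt n)$ term, Step~1 applies Distributed Greedy Max-Cut to $G_{low}$, whose maximum degree is strictly less than $\sqrt n$ by definition, so Theorem~\ref{Theorem - 1/2 deterministic Max-Cut } gives $\tilde O(\sqrt n + \log^* n)$ rounds, which absorbs into $O(\sqrt n)$. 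Step~2 relies on Lemma~\ref{limited distance}: every two vertices of the same $G_{high}$-component lie within distance $3\sqrt n$ in $G$, so both aggregating the enriched topology (component $+$ incident cross edges $+$ already-decided low-degree labels) at the lowest-id vertex and broadcasting the optimal assignment back each cost $O(\sqrt n)$ rounds in $\mathcal{LOCAL}$.

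The main obstacle is the accounting of the cross edges. Neither step touches them in an obvious way, yet they may form a constant fraction of $|E|$; a naive reading in which Step~2 merely optimizes inside each $G_{high}$-component leaves no guarantee on $E_{cross}$ and the $1/2$ bound would fail. The fix is to use the sequential order: because Step~2 executes after Step~1, the component's optimization implicitly considers $E(C)\cup\partial C$ with fixed boundary values, and the random-assignment comparison supplies the missing $|E_{cross}|/2$. Once this invariant is in place, the rest of the argument is a direct combination of Theorem~\ref{Theorem - 1/2 deterministic Max-Cut } and Lemma~\ref{limited distance}.
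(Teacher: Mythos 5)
Your proposal is correct and follows the paper's proof structure: Step~1 secures at least half of the edges of $G_{low}$ via Theorem~\ref{Theorem - 1/2 deterministic Max-Cut }, Step~2's per-component optimum (with the already-fixed low-degree boundary values) secures at least half of the remaining edges of $E_{high}\cup E_{cross}$, and the round bound combines $\Delta_{low}\le\min\{\Delta,\sqrt n\}$ for Step~1 with Lemma~\ref{limited distance} for Step~2 (with Step~2 vacuous when $\Delta<\sqrt n$, since $G_{high}$ is then empty). The only substantive difference is the benchmark used to lower-bound the component optimum: the paper compares it to continuing the greedy algorithm over the high-degree vertices, whereas you compare it to a uniform random assignment of the component, both yielding $\bigl(|E(C)|+|\partial C|\bigr)/2$; your averaging argument makes the accounting of the cross edges more explicit than the paper's brief claim of ``half of the edges in $G\setminus G_{low}$'', which is a welcome clarification rather than a deviation.
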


\begin{proof}
	We first prove the approximation ratio. Since Distributed Greedy Max-Cut is applied on $G_{low}$, at least half of the edges of $G_{low}$ are in the cut.
	Given the decisions of vertices in $G_{low}$, the algorithm finds an optimal solution for all vertices in $G_{high}$. Note that running Distributed Greedy Max-Cut on the high-degree vertices of $G$, would give at least half of the remaining edges. This is due to the fact that the algorithm makes sequential greedy decisions. Therefore, an optimal solution for the high-degree vertices guarantees at least half of the edges in $G\setminus G_{low}$, implying the approximation ratio.
	
	Applying Distributed Greedy Max-Cut on $G_{low}$ requires $\tilde{O}(\Delta_{low} +\log^*n)$ communication rounds, where $\Delta_{low}=min\{\Delta,\sqrt{n}\}$. Using Lemma \ref{limited distance} we have that each high degree vertex can communicate with every high-degree vertex connected to it in $G_{high}$, using at most $O(\sqrt{n})$ communication rounds. Hence, Steps $2.-4.$ of the algorithm take $O(\sqrt{n})$ communication rounds. We note that when $\Delta<\sqrt{n}$, the algorithm terminates after the first step. Thus,
	the algorithm requires $min\{\tilde{O}(\Delta +\log^*n),O(\sqrt{n})\}$ communication rounds.\qed
\end{proof}

Using the above technique, we obtain a fast, deterministic algorithm for the Max-Dicut problem, by replacing the call to Distributed Greedy Max-Cut in Step $1.$ with a call to Distributed Greedy Max-Dicut.
Using the same arguments as in the analysis for the Max-Cut algorithm, we have:

\begin{theorem}
	Fast Distributed Greedy Max-Dicut yields a $\frac{1}{3}$-approximation to Max-Dicut, using $min\{\tilde{O}(\Delta +\log^*n),O(\sqrt{n})\}$ communication rounds in the $\mathcal{LOCAL}$ model.
\end{theorem}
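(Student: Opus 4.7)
The plan is to mirror the proof of the Fast Distributed Greedy Max-Cut theorem, substituting the $1/3$ guarantee of Theorem~\ref{Theorem - 1/3 deterministic Max-Dicut} for the $1/2$ guarantee of Theorem~\ref{Theorem - 1/2 deterministic Max-Cut } wherever the subroutine's guarantee is invoked. I will analyze the two steps of Algorithm~\ref{alg - local fast} (now with Distributed Greedy Max-Dicut in Step~$1$) separately and then combine the resulting bounds.

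For the approximation ratio, I partition the edges of $G$ into $E_{low}$, the edges internal to $G_{low}$, and $E_{rest}=E\setminus E_{low}$. Step~$1$ runs Distributed Greedy Max-Dicut on $G_{low}$, which by Theorem~\ref{Theorem - 1/3 deterministic Max-Dicut} contributes at least one-third of the optimum dicut on $G_{low}$ to the final dicut. Step~$2$ then, within each connected component of $G_{high}$, computes an optimum assignment of $V_{high}$ conditional on the $V_{low}$ decisions already made; viewing this as an unconstrained submodular maximization on $V_{high}$ with the contribution of $E_{rest}$ to the dicut as the objective, Step~$2$ attains the exact subproblem optimum, hence is at least as large as what Distributed Greedy Max-Dicut would output on $V_{high}$---and the latter is a $1/3$-approximation of that subproblem's optimum. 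Adding these two local guarantees and comparing with $|\mathit{OPT}(G)|$, exactly as in the Max-Cut proof, yields the claimed $1/3$-approximation.

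For the running time, Step~$1$ takes $\tilde{O}(\Delta_{low}+\log^* n)$ rounds with $\Delta_{low}=\min\{\Delta,\sqrt{n}\}$, by applying Theorem~\ref{Theorem - 1/3 deterministic Max-Dicut} to $G_{low}$. By Lemma~\ref{limited distance}, any two vertices connected in $G_{high}$ lie within distance $3\sqrt{n}$ in $G$, so in Step~$2$ each high-degree component can gather its topology at its lowest-id vertex, solve Max-Dicut optimally there (the $\mathcal{LOCAL}$ model allows unbounded messages and local computation), and broadcast the per-vertex outputs in $O(\sqrt{n})$ rounds. When $\Delta<\sqrt{n}$, $G_{high}$ is empty and Step~$1$ alone suffices, giving the overall $\min\{\tilde{O}(\Delta+\log^* n),O(\sqrt{n})\}$ bound.

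The main obstacle, compared with the Max-Cut case, is the approximation analysis itself: the Max-Cut proof exploits the strong per-vertex greedy property that at least half of each vertex's incident edges are cut, whereas Distributed Greedy Max-Dicut only guarantees $1/3$ of a subproblem's optimum, not a fixed fraction of its edge count. I would bridge this gap by combining the submodularity of $f$ with the telescoping argument of Buchbinder et al.~\cite{buchbinder2015tight} in the hybrid setting, arguing that restricting $T_{low}$ to the Step-$1$ choice and then maximizing optimally over $V_{high}$ loses at most a factor of three relative to the global $\mathit{OPT}(G)$.
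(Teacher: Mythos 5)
Your running-time analysis is exactly the paper's (Lemma~\ref{limited distance} for Step~2, the coloring-based bound for Step~1), but the approximation argument has a genuine gap at its central step, the ``adding the two local guarantees''. What you actually lower-bound is $f_{low}(T_{low})+\max_{T_{high}}\,(\text{dicut contribution of } E_{rest}\text{ given }T_{low})$, and this sum need not be comparable to $\mathit{OPT}(G)$ at all: for Max-Dicut, unlike Max-Cut, the later optimal placement of a high-degree endpoint cannot resurrect a directed edge whose other endpoint was already placed on the wrong side in Step~1, and Step~1 as you describe it ignores the cross edges entirely. Concretely, let $G_{low}$ be edgeless and let every edge of $G$ be directed from a high-degree vertex to a low-degree vertex (say $\sqrt{n}$ high-degree vertices, each with $\sqrt{n}$ out-neighbors, each low vertex having a single incoming edge). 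In Step~1 every low vertex sees $a_i=b_i=0$ and by the tie-breaking joins $S$, so no edge of $G$ can ever enter the dicut, while $\mathit{OPT}(G)=|E|$; both of your ``local guarantees'' are tight at $0$, and the additive comparison with $\mathit{OPT}(G)/3$ collapses. (If ties were broken toward $\bar S$, orient the cross edges the other way.) So the obstacle you flag in your final paragraph is not a deferred technicality -- it is the entire proof, and it cannot be closed for Step~1 read as ``run the greedy on $G_{low}$ with respect to $G_{low}$'s edges only.''

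The way to make ``the same arguments as for Max-Cut'' (which is all the paper offers) actually work is a hybrid-domination argument that dictates how Step~1 must be interpreted: the low-degree vertices must evaluate $a_i,b_i$ with respect to the dicut function of the \emph{whole} graph $G$, counting their cross edges with the still-undecided high-degree neighbors treated as lying in $Y\setminus X$; this needs no communication over cross edges, only local knowledge of their directions. Then the run of Fast Distributed Greedy Max-Dicut is dominated by a legal run of the greedy of Theorem~\ref{Theorem - 1/3 deterministic Max-Dicut} on all of $G$ in which the low-degree vertices are processed first and the high-degree vertices afterwards: Step~2's optimal completion is at least as good as the greedy completion of the high vertices, and the per-step damage bounds of Lemmas~\ref{lemma 2-usm} and~\ref{linearity of marginal profits} telescope over the entire sequence to give $f(\text{output})\ge \mathit{OPT}(G)/3$. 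Note this is genuinely different from your $E_{low}/E_{rest}$ split: the Max-Cut proof's per-edge charging (``every remaining edge is the responsibility of a later, high-degree endpoint that can always cut half of them'') has no dicut analogue, which is precisely why the dicut case needs telescoping over full-graph marginals rather than an additive combination of the two subproblems' optima.
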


\comment{
\mysection{Discussion}
In this paper we addressed Max-Cut in the classic distributed network models, where the graph represents a synchronous communication network. Our clustering-based and coloring-based techniques led to the development of the first distributed approximation algorithms for Max-Cut and Max-Dicut on general graphs in these models. We mention below some avenues for future work. It would be interesting to understand how close is the complexity of our algorithms to the true complexity of the problems in each of the distributed models. Moreover, the gap between the best approximation ratio achieved in the $\mathcal{CONGEST}$ model and the best approximation ratio achieved in the $\mathcal{LOCAL}$ model raises the immediate question of how congestion affects the ability to approximate the problems.
Another direction is to obtain lower bounds in terms of running times in approximating these problems. In particular: what is the lower bound for achieving an approximation factor strictly better than $1/2$ in the $\mathcal{CONGEST}$ model?
Finally, it would be interesting to extend the algorithms for other, and perhaps more general submodular maximization problems.
}
\myparagraph{Acknowledgements}
We thank Roy Schwartz and Shay Kutten for stimulating discussions and for helpful comments on the paper.


\bibliographystyle{splncs03}
\bibliography{MaxCut}

\end{document}